		\newtheoremstyle{normalstyle}
			{.5em}
			{3pt}
			{\itshape}
			{}
			{\normalfont\bfseries}
			{.\newline}
			{.5em}
			{}
		\theoremstyle{normalstyle}
		\newtheorem{thm}{Theorem}
		\newtheorem{lem}[thm]{Lemma}
		\newtheorem{cor}[thm]{Corollary}
		\theoremstyle{definition}
		\lstdefinelanguage{XMLSchema}
					{morekeywords={schema,element,annotation,appinfo,complexType,simpleType,choice,all,sequence},		
			sensitive=true,
			morestring=[b]",
		}
		\lstdefinelanguage{ASN1}
			{morekeywords={},		
			sensitive=true,
			morestring=[b]",
		}
		\renewcommand{\epsilon}	{\varepsilon}
\newcommand{\comment}[1]{}
\newcommand{\COMMENTOUT}[1]{} 
\newcommand{\N}{\mathbb{N}}
\newcommand{\Z}{\mathbb{Z}}
\newcommand{\R}{\mathbb{R}}
\newcommand{\norm}[1]{\left|\left|#1\right|\right|}
\newcommand{\df}{\mathrel{\mathop:}=}
\newcommand{\fd}{=\mathrel{\mathop:}}
\title{Faster Algorithms for Integer Programs with Block Structure}
\author{
    Friedrich Eisenbrand\\
    EPFL\\
    friedrich.eisenbrand@epfl.ch
\and
    Christoph Hunkenschr{\"o}der\\
    EPFL\\
    christoph.hunkenschroder@epfl.ch
\and
    Kim-Manuel Klein\\
    EPFL\\
    kim-manuel.klein@epfl.ch
}
\DeclareMathOperator{\supp}{supp}
\newcommand{\A}{\mathcal{A}}
\newcommand{\T}{\mathcal{T}}
\newcommand{\LO}{\mathcal{O}}
\begin{document}
\selectlanguage{english}

	\maketitle	


\begin{abstract}
We consider integer programming problems  $\max \{ c^Tx : \A  x = b, \, l ≤ x ≤ u, \, x ∈ ℤ^{nt}\}$ where  $\A$ has a  (recursive) block-structure generalizing  \emph{$n$-fold}  integer programs which  recently  received considerable  attention in the literature. An $n$-fold IP is an integer program where $\A$ consists of $n$ repetitions of submatrices $A ∈ ℤ^{r × t}$ on the top horizontal part  and $n$ repetitions of a matrix $B ∈ ℤ^{s × t}$  on the diagonal below the top part.   
Instead of allowing only two types of block matrices, one for the horizontal line and one for the diagonal, we generalize the $n$-fold  setting to allow for arbitrary matrices in every block. We show that such an integer program can be solved in time $n^2t^2 φ ⋅ (r\,s\,Δ)^{\LO(rs^2+ sr^2)}$ (ignoring logarithmic factors). Here $Δ$ is an upper bound on the largest absolute value of an entry of $\A$ and $φ$ is the largest binary encoding length of a coefficient of $c$. This improves upon the previously best algorithm of Hemmecke, Onn and  Romanchuk that runs in time $n^3t^3  φ ⋅ Δ^{\LO(t^2s)}$. In particular, our algorithm is not exponential in the number  $t$ of columns of $A$ and $B$. 

Our algorithm is based on a new upper  bound on the $\ell_1$-norm of an element of the \emph{Graver basis} of an integer matrix and on a proximity bound between the LP and IP optimal solutions tailored for IPs with block structure. These new bounds rely on the \emph{Steinitz Lemma}.

Furthermore,  we extend our techniques to the recently introduced \emph{tree-fold IPs}, where we again present a more efficient algorithm in a generalized setting.  
\end{abstract}

\section{Introduction}
An \emph{integer program (IP)} is an optimization problem of the form 
\begin{equation}
  \label{eq:2}
  \max\{c^Tx : Ax = b, \, l ≤ x ≤ u, \,  x ∈ ℤ^n\}
\end{equation}
which is described by  a  \emph{constraint matrix} $A ∈ ℤ^{m ×n}$, an \emph{objective function} vector $c ∈ ℤ^n$  a \emph{right-hand-side} vector $b ∈ ℤ^m$ and \emph{ lower} and \emph{upper bounds} $l≤x≤u$. Integer programming is one of the most important paradigms in the field of algorithms as a breadth of combinatorial optimization problems 
have an IP-model, see, e.g.~\cite{nemhauser1988integer,schrijver1998theory}. Since integer programming is NP-hard, there is a strong interest in restricted versions of integer programs that can be solved in polynomial time, while still  capturing interesting classes of combinatorial optimization problems. A famous example is the class of integer programs with \emph{totally unimodular} constraint matrix, capturing flow, bipartite matching and shortest path problems for example. This setting has been extended to \emph{bimodular} integer programming recently~\cite{artmann2017strongly}.

Another such polynomial-time solvable restriction is $n$-fold integer programming~\cite{de2008n}. 
Given two matrices $A \in\Z^{r \times t}$ and $B \in\Z^{s \times t}$ and a vector $b \in \Z^{r+ns}$ for some $r,s,t,n \in \Z_+$.
An $n$-fold Integer Program (IP) is an integer program~(\ref{eq:2}) with constraint matrix 
\begin{equation}  \label{eq:classicalnfIP}
    \A =
    \begin{pmatrix}
        A & A & \hdots & A \\
        B & 0 & \hdots & 0\\
        0 & B & & \vdots \\
        \vdots & & \ddots & 0 \\
        0 & \hdots & 0 & B 
      \end{pmatrix}
 \end{equation}
Clearly, one can assume that $t ≥r$ and $t≥s$ holds, as linearly dependent equations can be removed. 
Notice that the number of variables of an $n$-fold integer program is $t ⋅n$. 
 The best known algorithm to solve an $n$-fold IP is due to Hemmecke, Onn, Romanchuk~\cite{hemmecke2013n-fold} with a running time of $\LO(n^3 φ) ⋅ \Delta^{\LO(t(rs+st))}$, where $\Delta$ is the absolute value of the largest entry in $\A$ and $φ$ is the logarithm of the largest absolute value of a component of $c$. 
 For fixed $Δ$, $r$, $s$ and $t$,  the running time depends only polynomially (cubic) on the number of variables and is therefore more efficient than applying  algorithms for general IPs based in lattice-basis reduction~\cite{kannan1987minkowski,lenstra1983integer} or dynamic programming~\cite{papadimitriou1981complexity,eisenbrand2018proximity}.


\bigskip 

\noindent 
The $n$-fold setting has gained strong momentum in the last years, especially  in the fields of \emph{parameterized complexity} and \emph{approximation algorithms}.
An algorithm is \emph{fixed parameter tracktable (fpt)} with respect to a parameter $k$ derived from the input, if its running time is of the form $f(k) \cdot n^{O(1)}$ for some computable function $f$. The result of Hemmecke et al.~\cite{hemmecke2013n-fold} shows that integer programming is fixed parameter tracktable with respect to $Δ,s,r$ and $t$.

This opens the possibility to model combinatorial optimization problems with a fixed parameter as an $n$-fold integer program, see for instance~\cite{knop2017scheduling,chen2017scheduling} and thereby  obtain novel fpt-results.
Very recently Jansen, Klein, Maack and Rau \cite{jansen2018scheduling} used $n$-fold IPs to formulate an enhanced \emph{configuration IP}, that is capable to track additional properties of jobs. With this enhanced IP they were able to develop approximation algorithms for several scheduling problems that involve setups. 
Not only for the scheduling problems, but also in the design of efficient algorithms for string and social choice problems, $n$-fold IPs have been successfully applied~\cite{KnopKM17-bribery,knop2017combinatorial}.

A generalization of the classical $n$-fold IP, called \emph{tree-fold} IP, was very recently introduced by Chen and Marx~\cite{chen2018covering}.
A matrix $A$ is of tree-fold structure,
if it is of recursive $n$-fold structure,
i.e.\ the matrices $B^{(i)}$ in IP~\eqref{eq:classicalnfIP} are of $n^\prime$-fold structure themselves, and so on.
Chen and Marx presented an algorithm to solve tree-fold IPs which runs in time $f(L) \cdot n^3 \varphi$, where $\varphi$ is the encoding length and $L$ involves parameters of the tree like the height of the tree and the number of variables and rows of the involved sub-matrices.
They applied the tree-fold IP to a special case of the traveling salesman problem,
where $m$ clients have to visit every node of a weighted tree and the objective is to minimize the longest tour over all clients.
Using the framework of tree-fold IPs, they obtained an fpt algorithm with a running time of $f(K) \cdot |V|^{\LO(1)}$, where $K$ is the longest tour of a client in the optimal solution and $V$ is the set of vertices of the tree. However, the function $f$ involves a term with a tower of $K$ exponents.

\subsection{Graver Bases and Augmentation Algorithms}
\label{sec:grav-bases-augm}

Before we discuss our contributions, we have to review the core concepts of the algorithm of Hemmecke, Onn and Romanchuk~\cite{hemmecke2013n-fold} in a nutshell.  

Suppose  we are solving a general integer program~(\ref{eq:2}) with constraint matrix $A ∈ ℤ^{m ×n}$ and that  we have a feasible solution $z_0$ at hand. Let $z^*$ be the optimal solution. The vector $z^*-z_0$ lies in the Kernel of $A$, i.e., $A(z^*-z_0)=0$. An integer vector $z ∈ \ker(A)$ is called a \emph{cycle} of $A$.  Two vectors $u,v ∈ ℝ^n$ are said to be \emph{sign compatible} if $u_i ⋅ v_i ≥0$ for each $i$. A cycle $u ∈ \ker(A)$ is \emph{indecomposable} if it is not the sum of two sign-compatible and non-zero cycles of $A$. The set of indecomposable and integral elements from the kernel of $A$ is called the \emph{Graver basis} of $A$,~\cite{graver1975foundations}, see also \cite{onn2010nonlinear, loera2012algebraic}. 

A result of Cook, Fonlupt and Schrijver~\cite{cook1986integer} implies that there exist $2\,n$ Graver-basis elements $g_1,\dots,g_{2\,n} ∈ \ker(A)$ each sign compatible with $z^*-z_0$  such that 
\begin{displaymath}
  z^* - z_0 = ∑_{i=1}^n λ_i g_i 
\end{displaymath}
holds for $λ_i ∈ ℕ_0$. For each $i$ one has that $z_0+ λ_i g_i$ is a feasible integer solution of~(\ref{eq:2}). Furthermore, there exists one $i$ with  $  c^T(z^*-z_0) / (2\, n) ≤ λ_i c^T g_i$. Thus there exists an element $g$ of the Graver basis of  $A$ and a positive integer $λ ∈ ℕ$ such that $z_0 + λ \, g$  is feasible and the gap to the optimum value has been reduced by a factor of $1-1/(2\,n)$. 

Why should it be any simpler to find such an \emph{augmenting vector} $g$ as above? The crucial ingredient that is behind the power of this approach are \emph{bounds} on the $\ell_1$-norm of elements of the Graver basis of $A$. In some cases, these bounds are much more restrictive than the original lower and upper bounds $l≤x≤u$ and thus help in dynamic programming. In fact, each element $g$ of the graver basis of $A$ has $\ell_1$-norm bounded by 
$  \|g\|_1 ≤  δ ⋅ (n -m)$ where $δ$ is the largest  absolute value of a sub-determinant of $A$, see,\cite{onn2010nonlinear}. Applying the Hadamard bound, this means that 
\begin{equation}
  \label{eq:3}
   \|g\|_1 ≤ m^{m/2} Δ^m ⋅ (n-m),
\end{equation}
where $Δ$ is a largest absolute value of an entry of $A$. 
Let us denote $m^{m/2} Δ^m ⋅ (n-m)$ by $G_A$. 
In order to find an augmenting solution which reduces the optimality gap by a factor of (roughly) $1-1/n$ one solves the following \emph{augmentation integer program} with  a suitable $λ$,
\begin{equation}
\label{eq:1}
  \max\{c^Tx : Ax = 0, \, l-z_0 ≤ λ ⋅x ≤ u-z_0, \,  \|x\|_1 ≤  G_A, x∈ ℤ^n\}. 
\end{equation} 
and replaces $z_0$ by $z_0 + λ⋅ x^*$, where $x^*$ is the optimal solution of~(\ref{eq:1}). The number of augmenting steps can be bounded by $\LO(n \log (c^T(z^*-z_0)))$. 

At first sight, it seems that one has not gained much with this
approach, except that the right-hand-side vector $b$
has disappeared. In the case of $n$-fold
integer programming however, the $\ell_1$-norm
of an element of the graver basis of $\A$
is bounded by a function in $r,s,t$
and $Δ$
and thus much smaller than the bound~(\ref{eq:3}). This can be
exploited in dynamic programming approaches.

\subsection*{Contributions of this paper}

We present several elementary observations that, together, result in a much faster algorithm for integer programs in block structure including $n$-fold and tree-fold integer programs. We start with the following. 

\begin{enumerate}[i)]
\item The $\ell_1$-norm of an element of the Graver basis of a given matrix $A ∈ ℤ^{m×n}$ is bounded by $(2\, m ⋅ Δ+1)^m$, where $Δ$ is an upper bound on the absolute value of each entry of $A$. This is shown with the Steinitz lemma and uses similar ideas as in~\cite{eisenbrand2018proximity}. Compared to the previous best bound (\ref{eq:3}), this new bound is independent on the number of columns $n$ of $A$.  \label{item:3}
\end{enumerate}
We then turn our attention to  integer programming problems
\begin{equation}
\label{eq:4}
  \max\{c^Tx : \A\, x = b, \, l ≤ x ≤ u, \,  x ∈ ℤ^{n×t}\}
\end{equation}
 with constraint matrix of the form 
\begin{align*}
    \A =
    \begin{pmatrix}
        A^{(1)} & A^{(2)} & \hdots & A^{(n)} \\
        B^{(1)} & 0 & \hdots & 0\\
        0 & B^{(2)} & & \vdots \\
        \vdots & & \ddots & 0 \\
        0 & \hdots & 0 & B^{(n)} 
    \end{pmatrix}, 
\end{align*}
where $A^{(1)},\dots,A^{(n)} \in \Z^{r \times t}$ and 
$B^{(1)}, \dots, B^{(n)} \in \Z^{s \times t}$ are arbitrary matrices. This is a more general setting than $n$-fold integer programming, since the matrices on the top line and on the diagonal respectively  do not have to repeat. 
In this setting, we obtain the following results. 
\begin{enumerate}[i)]
\setcounter{enumi}{1}
\item The $\ell_1$-norm of an element of the Graver basis of $\A$ is \label{item:4} bounded by $\LO(r\,s\,Δ)^{r\,s}$ which is independent on the number of columns $t$ of the $A^{(i)}$ and $B^{(i)}$. \label{item:1}
\item We next provide a special proximity bound for integer programs with \label{item:2} block structure~(\ref{eq:4}). Let $x^*$ be an arbitrary optimal solution of the linear programming relaxation of~(\ref{eq:4}). We show that there exists an optimal solution $z^*$ of (\ref{eq:4}) with 
  \begin{displaymath}
    \|x^* - z^*\|_1 ≤ n\, t\, (r\,s\,Δ)^{\LO(r\,s)}. 
  \end{displaymath}
\item We then exploit the  bounds \ref{item:1}) and \ref{item:2}) in a new dynamic program to solve~(\ref{eq:4}). Its running time is bounded by 
\[
    n^2t^2 \varphi \log^2 nt \cdot
            (rs\Delta)^{\mathcal{O} (r^2s + rs^2)}
                + \textbf{LP}
\]
where $\varphi$ denotes the logarithm of the largest number
occurring in the input,
and $\textbf{LP}$ denotes the time needed
to solve the LP relaxation of~(\ref{eq:4}).
%
\end{enumerate}

\noindent 
The main advantage of the running time of our algorithm is the improved dependency on the parameter $t$. In contrast, the previous best known algorithm by Hemmecke, Onn and Romanchuk~\cite{hemmecke2013n-fold} for classical $n$-fold IPs invovles a term $\Delta^{\LO(st^2))}$ 
and therefore has an exponential dependency on $t$. Recall that we can assume that $t≥r,s$ holds. The number of columns $t$ can be very large. Even if we do not allow  column-repetitions, $t$ can be as large as $Δ^{r+s}$ and in applications involving \emph{configuration IPs} this is often the order of magnitude one is dealing with. 
Knop, Koutecky and Mnich~\cite{knop2017combinatorial} improved the dependency of $t$  in a special setting  of $n$-fold to  a factor $t^{\LO(r)}$. In their setting, the matrix $B$ on the diagonal consists of one line of ones only. Our running time is an improvement of their result also in this case.




Next, we generalize the notion of tree-fold IPs of \cite{chen2018covering} and allow for arbitrary matrices at each node.
We obtain  the following  natural description of a generalized tree-fold IP:
Given a set of linear equalities $A^{(1)} x = b^{(1)}, \ldots , A^{(N)} x = b^{(N)}$ for some matrices $A^{(i)} \in \Z^{m_i \times n}$ and right hand sides $b^{(i)} \in \Z^{m_i}$ over variables $x \in \Z^{n}$ with upper and lower bounds.
We define a partial order $\preceq$ on the matrices by $A^{(i)} \preceq A^{(j)}$ if the index set of non-zero columns of $A^{(i)}$ is a subset of the index set of non-zero columns of $A^{(j)}$.
Then the IP 
consisting of this set of linear equalities together with an objective function and bounds for the variables $x$ is a \emph{tree-fold IP}, if the partial order $\preceq$ on the matrices $A^{(i)}$ forms a rooted directed tree if arcs stemming from transitivity are omitted. For a precise definition of tree-fold IPs, we refer to Section \ref{sec:tree-fold}.

In this setting we obtain the following result. 

\begin{enumerate}[i)]
\setcounter{enumi}{4}
\item 
We present an algorithm for generalized tree-fold IPs with a running time that depends roughly doubly exponential on the height of the tree (for a precise running time we refer to Lemma \ref{thm:tree-fold-time}). With this algorithm we improve upon the algorithm by Chen and Marx~\cite{chen2018covering}, which has a running time involving a term that has a tower of $\tau$ exponents, where $\tau$ is the height of the tree.

\item 
Using the tree-fold IP formulation of \cite{chen2018covering}, this implies an fpt algorithm for the the traveling salesman problem on trees with $m$ clients with running time $2^{2^{poly(K)}} \cdot |V|^{O(1)}$, where $K$ is the longest tour of an optimal solution over all clients.
\end{enumerate}

\paragraph*{Notation} We use the following notation throughout this paper.
For positive numbers $n,r,s,t \in \N$ and index $i = 1, \dots , n$,
let $A^{(i)} \in \Z^{r \times t}$, $B^{(i)} \in \Z^{s \times t}$
with $\norm{A^{(i)}}_\infty$, $\norm{B^{(i)}}_\infty \leq \Delta$
for some constant $\Delta$.
The $j$-th row of the matrix $A^{(i)}$, $B^{(i)}$ respectively, will be denoted by $A^{(i)}_j$, $B^{(i)}_j$ respectively.
With $\log x$, we denote the logarithm to the basis $2$ of some number $x$.
The logarithm of the largest number occurring in the input is denoted by $\varphi$.



We will often subdivide the set of entries in a vector $y\in \R^{nt}$ or a vector $\A y \in \R^{r + ns}$ into \emph{bricks}.
A vector $y \in \R^{nt}$ will consist of $n$ bricks with $t$ variables each,
i.e.\
\[
    y^T = \begin{pmatrix}
        (y^{(1)})^T, &
        (y^{(2)})^T, &
        \hdots, &
        (y^{(n)})^T
    \end{pmatrix}
\]
with the brick $y^{(i)} \in \R^t$ corresponding to the block $B^{(i)}$.
A vector $g = \A y \in \R^{r + ns}$ will consist of $n+1$ bricks,
\[
    (\A y)^T = \begin{pmatrix}
            (g^{(0)})^T, &
            (g^{(1)})^T, &
            \hdots, &
            (g^{(n)})^T
        \end{pmatrix},
\]
where the first brick $g^{(0)} \in \R^r$ consists of the first $r$ entries
and corresponds to the block row $(A^{(1)},\dots,A^{(n)})$ of $\A$,
and every other block $g^{(i)}$, $i\geq 1$, consists of $s$ entries
and corresponds to the block $B^{(i)}$.
We will always use upper indices with brackets when referring to the bricks,
and the indices will coincide with the index of the block $B^{(i)}$ they correspond to (except brick $g^{(0)}$).
A simple but crucial observation we will use several times is the following.
If $y$ is a cycle of $\A$,
then each brick $y^{(i)}$ is already a cycle of the matrix $B^{(i)}$.

\section{The norm of a  Graver-basis element}
\label{sec:new-bound-length}
In this section, we provide the details of the contributions~\ref{item:3}) and~\ref{item:4}). We will make use of the following lemma of Steinitz~\cite{steinitz1913bedingt,grinberg1980value}. Here  $\|⋅\|$ denotes an arbitrary norm. 

\begin{lem}[Steinitz Lemma]
\label{lem:steinitz}
Let $v_1,\dots,v_n \in \R^m$ be vectors with  
$\norm{v_i} \leq Δ$  for $i=1,\dots,n$.   
If $\sum_{i=1}^m v_i = 0$,  
then there is a reordering $\pi \in S_n$ such that for each $k ∈ \{1,\dots,n\}$ the  partial sum 
$
    p_k \df \sum_{i=1}^k v_{\pi(i)}
$
satisfies 
$\norm{p_k} \leq m \Delta$.
\end{lem}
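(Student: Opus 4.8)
The plan is to build the reordering one vector at a time while carrying along a fractional certificate that keeps every prefix sum short. Since any partial sum that involves at most $m$ of the vectors automatically has norm at most $m\Delta$, I may assume $n>m$ and only worry about long prefixes. Writing $p_k=\sum_{i=1}^k v_{\pi(i)}$ and using $\sum_i v_i=0$, observe that $p_k=-\sum_{i\in X}v_i$, where $X$ is the set of indices not yet placed; hence it suffices to control $\norm{\sum_{i\in X}v_i}$ as $X$ shrinks from $\{1,\dots,n\}$ down to a set of size $m$.

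The invariant I would maintain on the remaining set $X$ is the existence of coefficients $\lambda_i\in[0,1]$, $i\in X$, with $\sum_{i\in X}\lambda_i v_i=0$ and $\sum_{i\in X}\lambda_i\geq |X|-m$. This is exactly the certificate the bound needs: from $\sum_{i\in X}\lambda_i v_i=0$ we get $\sum_{i\in X}v_i=\sum_{i\in X}(1-\lambda_i)v_i$, and since $1-\lambda_i\in[0,1]$ with $\sum_{i\in X}(1-\lambda_i)=|X|-\sum_{i\in X}\lambda_i\leq m$, the triangle inequality gives $\norm{\sum_{i\in X}v_i}\leq m\Delta$. The base case $X=\{1,\dots,n\}$ is immediate by taking all $\lambda_i=1$, using $\sum_i v_i=0$.

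The heart of the argument is the inductive step: given the certificate on $X$ with $|X|>m$, I must choose the next vector to place, i.e. an index $j\in X$ to delete, and produce a certificate $\lambda'$ on $X\setminus\{j\}$ with $\sum_i\lambda'_i v_i=0$ and $\sum_i\lambda'_i\geq |X|-1-m$. Here I would first replace $\lambda$ by a vertex of the polytope $\{\lambda\in[0,1]^X:\sum_{i\in X}\lambda_i v_i=0,\ \sum_{i\in X}\lambda_i=|X|-m\}$. This polytope is cut out by only $m+1$ linear equations, so at a vertex at most $m+1$ coordinates are strictly fractional and the rest are integral; this Carathéodory-type bound on the number of fractional coordinates is precisely what keeps the overhead equal to $m$ rather than growing with $n$. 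Having isolated an integral coordinate, I delete it: if some integral coordinate equals $0$, restricting the certificate to $X\setminus\{j\}$ preserves the vector equation and only helps the weight bound; otherwise I delete a coordinate of value $1$, re-solve on the smaller set, and use the few fractional coordinates to repair the vector equation from $-v_j$ back to $0$ while keeping the total weight at least $|X|-1-m$.

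I expect this single-step existence to be the main obstacle. The delicate point is that one cannot in general find a certificate coordinate equal to $0$ to delete; already the one-dimensional example $v=(2,-1,-1)$ has its certificate strictly inside the positive orthant, and deleting the index of $v=2$ destroys feasibility while deleting either $-1$ keeps it. Thus the deletion must sometimes fall on a coordinate of value $1$, and the reduced certificate must be re-solved rather than merely restricted; proving that this reduced polytope is nonempty with total weight at least $|X|-1-m$ is the crux, and it is exactly here that the bound $m+1$ on the number of fractional coordinates of a vertex does the work. Once the step is established, iterating it from $X=\{1,\dots,n\}$ down to $|X|=m$ produces the permutation $\pi$, and the invariant yields $\norm{p_k}\leq m\Delta$ for every $k$, as required.
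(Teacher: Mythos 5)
First, a remark on scope: the paper does not prove this lemma at all --- it is quoted from Steinitz and Grinberg--Sevast'yanov --- so your attempt is measured against the classical argument rather than against anything in the text. Your architecture is exactly that classical argument: build the ordering backwards, maintain a fractional certificate $\lambda\in[0,1]^X$ with $\sum_{i\in X}\lambda_i v_i=0$ and $\sum_{i\in X}\lambda_i\ge |X|-m$, and exploit the fact that a vertex of the certificate polytope has at most $m+1$ fractional coordinates. The reduction of the norm bound to this invariant, and the base case, are correct.

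The gap is the one you yourself flag: the inductive deletion step is asserted, not proved, and as set up it is not clear it can be closed. You propose to delete a coordinate $j$ with $\lambda_j=1$ at a vertex of $\bigl\{\lambda\in[0,1]^X:\sum_{i\in X}\lambda_iv_i=0,\ \sum_{i\in X}\lambda_i=|X|-m\bigr\}$ and then ``repair'' the equation $\sum_{i\neq j}\lambda_iv_i=-v_j$ back to $0$; no argument is given that this repaired system with target $|X|-1-m$ is feasible, and the bound of $m+1$ on fractional coordinates does not obviously deliver it for that particular $j$. The step that actually works is slightly different: first lower the sum target on the \emph{same} index set, i.e.\ observe that $R\df\bigl\{\mu\in[0,1]^X:\sum_{i\in X}\mu_iv_i=0,\ \sum_{i\in X}\mu_i=|X|-1-m\bigr\}$ is nonempty because $\tfrac{|X|-1-m}{|X|-m}\,\lambda$ lies in it, and only then take a vertex $\mu^*$ of $R$. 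At such a vertex at least $|X|-m-1$ coordinates are integral; the number of coordinates equal to $1$ is at most $|X|-1-m$, and is at most $|X|-m-2$ as soon as some coordinate is fractional (the fractional coordinates contribute a positive amount to the sum $|X|-1-m$); in either case one checks that $\mu^*$ has a coordinate equal to $0$. Deleting that index yields the certificate on a set of size $|X|-1$ by restriction alone, with no repair needed. (Note also that for $|X|=m+1$ a vertex may have no integral coordinate at all --- e.g.\ $v=(1,-1)$, $m=1$ --- so ``having isolated an integral coordinate'' needs the separate, trivial, treatment of that case, where the new target sum is $0$ and any deletion works.) With this replacement the induction closes; without it, the crux of the lemma remains open.
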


\begin{lem}
  \label{lem:1}
  Let $A ∈ ℤ^{m ×n}$ be an integer matrix and $Δ$ be an upper bound on the absolute value of each component of $A$ and let $y ∈ℤ^n$ be an element of the Graver basis of $A$. Then $\|y\|_1≤ (2m Δ+1)^m$. 
\end{lem}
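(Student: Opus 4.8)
The plan is to bound the $\ell_1$-norm of a Graver-basis element $y$ by applying the Steinitz Lemma to a cleverly chosen sequence of vectors whose partial sums stay small, and then using a pigeonhole argument to show that if $\|y\|_1$ were too large, we could extract a sign-compatible subcycle, contradicting indecomposability. First I would write $y = \sum_{j} e_j y_j$ in terms of its coordinates, but more usefully I expand $y$ into a long list of column vectors of $A$: for each coordinate $j$ with $y_j \neq 0$, I take $|y_j|$ copies of the vector $\sgn(y_j)\, A_{\cdot j} \in \R^m$, where $A_{\cdot j}$ is the $j$-th column. Call these vectors $v_1,\dots,v_N$ where $N = \|y\|_1$. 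Each satisfies $\|v_i\|_\infty \le \Delta$, and since $y \in \ker(A)$ we have $\sum_{i=1}^N v_i = A y = 0$.

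Next I would apply the Steinitz Lemma (Lemma~\ref{lem:steinitz}) with the $\ell_\infty$-norm to reorder these vectors as $v_{\pi(1)},\dots,v_{\pi(N)}$ so that every partial sum $p_k = \sum_{i=1}^k v_{\pi(i)}$ satisfies $\|p_k\|_\infty \le m\Delta$. Each partial sum $p_k$ is an integer vector in $\Z^m$ lying in the box $\{z \in \Z^m : \|z\|_\infty \le m\Delta\}$, which contains exactly $(2m\Delta+1)^m$ integer points. The key idea is pigeonhole: if $N = \|y\|_1 > (2m\Delta+1)^m$, then among the partial sums $p_0 = 0, p_1, \dots, p_N$ (which number $N+1$) two must coincide, say $p_a = p_b$ with $a < b$.

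The heart of the argument, and the step I expect to be the main obstacle, is turning a repeated partial sum into a genuine decomposition that contradicts indecomposability. From $p_a = p_b$ I get that the partial sum $\sum_{i=a+1}^{b} v_{\pi(i)} = 0$, so the sub-multiset of columns between positions $a+1$ and $b$ forms a nonzero cycle $w$ of $A$. I must check two things: that $w$ is sign-compatible with $y$, and that $y - w$ is also a nonzero cycle sign-compatible with $y$. Sign-compatibility is the delicate point: because each $v_i$ was defined as $\sgn(y_j) A_{\cdot j}$, the vector $w$ is built by choosing, in each coordinate $j$, some number of the $|y_j|$ available $\pm 1$ contributions, all carrying the same sign as $y_j$; hence $w_j$ and $y_j$ have the same sign and $|w_j| \le |y_j|$. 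This guarantees $w$ is sign-compatible with $y$ and that $y = w + (y-w)$ is a decomposition into two sign-compatible integral cycles. Since $w \neq 0$ and (after ensuring we did not consume the entire list, which holds because the repeated indices satisfy $b-a < N$ whenever $p_a=p_b$ with the pair chosen properly, or one argues $y-w \neq 0$ directly) both summands are nonzero, this contradicts that $y$ is indecomposable. Therefore $N = \|y\|_1 \le (2m\Delta+1)^m$, as claimed.

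<br>

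One subtlety I would be careful about is the boundary case where the repeated pair might give $w = y$, i.e.\ $y - w = 0$. This is handled by noting that if $p_0 = 0$ and $p_N = 0$ both appear, the relevant repetition we exploit is among the $N$ partial sums $p_1,\dots,p_N$ together with $p_0$, and a repetition with $0 \le a < b \le N$ and $(a,b) \ne (0,N)$ yields a proper nonzero complementary part; since there are $N+1$ values but only $(2m\Delta+1)^m$ boxes, as soon as $N+1 > (2m\Delta+1)^m$ a repetition exists, and a standard refinement of the pigeonhole (or simply picking the earliest collision) ensures the decomposition is proper. Hence the bound $\|y\|_1 \le (2m\Delta+1)^m$ follows.
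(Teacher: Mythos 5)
Your proposal is correct and follows essentially the same route as the paper's proof: expand $y$ into $\|y\|_1$ signed column copies summing to zero, apply the Steinitz Lemma in the $\ell_\infty$-norm, count the $(2m\Delta+1)^m$ integer points in the box of partial sums, and conclude by pigeonhole that a repeated partial sum would yield a sign-compatible decomposition contradicting indecomposability. The only difference is that you spell out the sign-compatibility and the degenerate-pair details that the paper leaves implicit, and these are handled correctly.
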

\begin{proof}
We define a sequence of vectors $v_1,\dots,v_{\|y\|_1} \in \Z^m$
in the following manner.
If $y_j \geq 0$, we add $y_j$ copies of the $j$-th column of $A$ to the sequence,
if $y_j < 0$ we add $\vert y_j \vert$ copies of the negative
of column $j$ to the sequence.

Clearly, the $v_i$ sum up to zero and their $\ell_∞$-norm is bounded by $Δ$. 
Using Steinitz, there is a reordering $u_1,\dots,u_{\|y\|_1}$ (i.e.\ $v_i = u_{\pi (i)}$ for some permutation $\pi$)
of this sequence s.t.\
each partial sum $p_k \df \sum_{j=1}^k u_j$
is bounded by $m\Delta$ in the $l_\infty$-norm. Clearly 
\begin{displaymath}
  | \{ x ∈ ℤ^n : \|x\|_∞ ≤ m\Delta \} | = \left( 2m \Delta + 1 \right)^m. 
\end{displaymath}
Thus, if $\|y\|_1 > \left( 2m \Delta + 1 \right)^m$, then two of these partial sums are the same and thus $y$ is not indecomposable. This shows the claim. 
\end{proof}

We will now apply the Steinitz lemma to bound the $\ell_1$-norm of an element of the Graver basis of
\begin{displaymath}
 \A =
    \begin{pmatrix}
        A^{(1)} & A^{(2)} & \hdots & A^{(n)} \\
        B^{(1)} & 0 & \hdots & 0\\
        0 & B^{(2)} & & \vdots \\
        \vdots & & \ddots & 0 \\
        0 & \hdots & 0 & B^{(n)} 
    \end{pmatrix}, 
  \end{displaymath}
where $A^{(1)},\dots,A^{(n)} \in \Z^{r \times t}$ and 
$B^{(1)}, \dots, B^{(n)} \in \Z^{s \times t}$ are arbitrary matrices.
Lemma~\ref{lem:1} shows that the $\ell_1$-norm of an element of the graver basis of a matrix $B^{(i)}$ is bounded by $(2\, s Δ+1)^s\fd L_B$.

\begin{lem} \label{lem:cycle_bound_An}
Let $y$ be a Graver-basis element of $\A$, then 
\[
    \norm{y}_1 \leq L_B \left( 2r \Delta L_B + 1 \right)^{r} \fd L_\A.
\]
\end{lem}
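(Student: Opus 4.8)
The plan is to run the Steinitz argument of Lemma~\ref{lem:1} one level up, with Graver-basis elements of the diagonal blocks $B^{(i)}$ playing the role that individual columns played in Lemma~\ref{lem:1}. I would start from the crucial observation recorded in the notation section: since $y$ is a cycle of $\A$, each brick $y^{(i)}$ satisfies $B^{(i)} y^{(i)} = 0$, so $y^{(i)}$ is a cycle of $B^{(i)}$. Hence each $y^{(i)}$ decomposes as a sign-compatible sum $y^{(i)} = \sum_j g^{(i)}_j$ of Graver-basis elements $g^{(i)}_j$ of $B^{(i)}$, all sign-compatible with $y^{(i)}$; by Lemma~\ref{lem:1} each atom obeys $\|g^{(i)}_j\|_1 \leq L_B$. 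Because all atoms of a single brick share the sign pattern of $y^{(i)}$, there is no cancellation, and summing coordinatewise gives $\|y\|_1 = \sum_{i,j} \|g^{(i)}_j\|_1$. Thus it suffices to bound the total number $N$ of atoms across all bricks by $(2r\Delta L_B + 1)^r$, after which multiplying by $L_B$ yields $L_\A$.

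Next I would exploit the top block row. Since $y$ is a cycle, the first $r$ coordinates of $\A y$ vanish, i.e.\ $\sum_i A^{(i)} y^{(i)} = \sum_{i,j} A^{(i)} g^{(i)}_j = 0$. Each vector $A^{(i)} g^{(i)}_j \in \Z^r$ has $\ell_\infty$-norm at most $\Delta \|g^{(i)}_j\|_1 \leq \Delta L_B$, since every entry of $A^{(i)}$ is bounded by $\Delta$. Collecting these $N$ vectors (one per atom) into a single list that sums to zero, the Steinitz Lemma provides a reordering under which every partial sum lies in the box $\{ x \in \Z^r : \|x\|_\infty \leq r\Delta L_B \}$, a set of cardinality $(2r\Delta L_B + 1)^r \fd M$.

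The final step is a pigeonhole/indecomposability argument, and this is where the only real care is needed. The $N+1$ partial sums $p_0 = 0, p_1, \dots, p_N = 0$ all lie in the box above; restricting attention to $p_0, \dots, p_{N-1}$, if $N > M$ then two of them coincide, say $p_a = p_b$ with $0 \leq a < b \leq N-1$. Let $w$ be the vector whose brick $w^{(i)}$ is the sum of exactly those atoms in the reordered segment $\{a+1, \dots, b\}$ belonging to brick $i$. Then $B^{(i)} w^{(i)} = 0$ for every $i$, while the top block row evaluates to $p_b - p_a = 0$, so $w$ is a cycle of $\A$; moreover $w$ is sign-compatible with $y$ and componentwise dominated by it, whence $y - w$ is also a cycle sign-compatible with $y$. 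Both $w$ and $y - w$ are nonzero — the former because the segment is nonempty and its atoms share a sign pattern, the latter because atom $N$ is excluded — contradicting the indecomposability of $y$. Hence $N \leq M$ and $\|y\|_1 = \sum_{i,j}\|g^{(i)}_j\|_1 \leq N L_B \leq L_B(2r\Delta L_B + 1)^r = L_\A$. The point demanding attention is precisely this bookkeeping: applying the pigeonhole to $p_0, \dots, p_{N-1}$ rather than to all $N+1$ partial sums is what rules out the trivial repeat $p_0 = p_N$ and guarantees a \emph{proper}, nonzero splitting of $y$.
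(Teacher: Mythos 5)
Your proof is correct and follows essentially the same route as the paper: decompose each brick $y^{(i)}$ into sign-compatible Graver elements of $B^{(i)}$, map each atom through the corresponding $A^{(i)}$ to get a zero-sum sequence of vectors of $\ell_\infty$-norm at most $\Delta L_B$, apply Steinitz, and bound the number of atoms by counting lattice points in the box. Your extra care in the pigeonhole step (restricting to $p_0,\dots,p_{N-1}$ to exclude the trivial repeat $p_0=p_N$ and verifying that both $w$ and $y-w$ are nonzero sign-compatible cycles) makes explicit a detail the paper passes over with ``if two partial sums were the same, we could decompose $y$,'' but the argument is the same.
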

\begin{proof}
Let $g$ be a graver basis element of $B^{(i)}$.
Note that as $\norm{g}_1 \leq L_B$ and $\norm{A^{(i)}}_\infty \leq \Delta$,
the vector $A^{(i)} g$ is bounded by
\begin{align}
\label{eq:Ai_g}
    \norm{A^{(i)} g}_\infty &\leq \Delta L_B.
\end{align}
Now consider a graver basis element $y \in \Z^{nt}$ of $\A$
and split it according to the matrices $B^{(i)}$ into bricks,
i.e.\ $y^T = ((y^{(1)})^T, \dots, (y^{(n)})^T)$ with each $y^{(i)} \in \Z^{t}$
being a cycle of $B^{(i)}$.
Hence, each $y^{(i)}$ can be decomposed into the sum of
graver basis elements $y_j^{(i)}$ of $B^{(i)}$ i.e. $y^{(i)} = y^{(i)}_{1} + \ldots + y^{(i)}_{N_i}$ .
Thus, we have a decomposition
\begin{align*}
    0 &= (A^{(1)},\dots,A^{(n)}) y \\
	    &= A^{(1)} y^{(1)} + \dots + A^{(n)} y^{(n)} \\
	    &= A^{(1)} y_1^{(1)} + \dots + A^{(1)} y_{N_1}^{(1)} + \dots
	    	+ A^{(n)} y_1^{(n)} + \dots + A^{(n)} y_{N_n}^{(n)} \\
	    &\fd v_1 + \dots + v_N \in \Z^r
\end{align*}
for some $N = \sum_{i=1}^n N_i$ and $\norm{v_i}_\infty \leq \Delta L_B$
for $i=1,\dots,N$, using~\eqref{eq:Ai_g}.
Now we apply Steinitz to reorder the $v_i$ s.t.\
each partial sum is bounded by $r \Delta L_B$ in the $l_\infty$-norm.
Again, if two partial sums were the same, we could decompose $y$,
thus the number $N$ of summands is bounded by
$\left( 2r \Delta L_B + 1 \right)^{r}$.
Each $v_i$ is the sum of at most $L_B$ columns of some $A^{(j)}$,
hence
\begin{align*}
\norm{y}_1 &\leq L_B \left( 2r \Delta L_B + 1 \right)^{r} \\
	&= \left( 2s \Delta + 1 \right)^s
		\left( 2r \Delta \left( 2s \Delta + 1 \right)^s + 1 \right)^{r} \\
	&= L_A,
\end{align*}
finishing the proof.
\end{proof}

\section{Solving the Generalized n-fold IP}
\label{sec:solving_gen_IP}

Given a feasible solution $x$ of the IP \eqref{eq:4} we
now follow the same principle that we outlined in Section~\ref{sec:grav-bases-augm}. 
There exists an element $y$ of the Graver basis of  $\A$ and a positive integer $λ ∈ ℕ$ such that $x + λ \, y$  is feasible and reducing the gap to the optimum value  by a factor of $1-1/(2\,n)$.
Suppose that we know $λ$. With our bound on $\|y\|_1≤ L_A$ we will find an augmenting vector of at least this quality by solving the following   
\emph{augmentation IP}:
\begin{align}
\label{eq:nfIPker}
    \max c^T y & \\
    \A y &= 0 \nonumber \\
    \norm{y}_1 &\leq L_\A \nonumber \\
    l - z \leq \lambda y &\leq u - z \nonumber \\
    y &\in \Z^{nt} \nonumber
\end{align}
%
The strength of our bound $L_A$ is its independence on $t$. 
We first describe now how to solve this augmentation IP.



\begin{lem}
\label{lem:nfIPker}
Let $\lambda$ be a fixed positive integer.
The augmentation IP ~\eqref{eq:nfIPker},
can be solved in time
$nt \left(r s \Delta \right)^{\mathcal{O}(r^2s+rs^2)}$.
\end{lem}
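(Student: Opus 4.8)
The plan is to solve the augmentation IP by a single dynamic program that sweeps through the $nt$ columns of $\A$ one block at a time, exploiting the fact that $\A y = 0$ decouples into the \emph{top} equation $A^{(1)} y^{(1)} + \dots + A^{(n)} y^{(n)} = 0$ together with the per-brick \emph{cycle} conditions $B^{(i)} y^{(i)} = 0$. First I would order the columns brick by brick and maintain as state a pair $(\alpha, \beta) \in \Z^r \times \Z^s$, where $\alpha$ is the accumulated top sum over all columns processed so far and $\beta$ is the accumulated middle sum over the columns of the \emph{current} brick only. Processing the $k$-th column of brick $i$ with value $y_k$ moves $(\alpha, \beta)$ to $(\alpha + y_k A^{(i)}_{\cdot k},\, \beta + y_k B^{(i)}_{\cdot k})$ and collects reward $c_k y_k$, subject to $l_k - z_k \le \lambda y_k \le u_k - z_k$. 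At the boundary between brick $i$ and brick $i+1$ I require $\beta = 0$ (encoding $B^{(i)} y^{(i)} = 0$) and reset the middle coordinate; after the last brick I require $\alpha = 0$. An optimal $y$ is then a maximum-reward source-to-sink path in this layered directed acyclic graph.

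The heart of the argument is bounding the state space, and here I would invoke the norm bound of Lemma~\ref{lem:cycle_bound_An}. Since it suffices to recover a single Graver augmenting vector $y$ of $\A$, and every such $y$ satisfies $\norm{y}_1 \le L_\A$, every prefix partial sum is controlled: along the sweep one has $\norm{\alpha}_\infty, \norm{\beta}_\infty \le \Delta \norm{y}_1 \le \Delta L_\A$, because the entries of the $A^{(i)}, B^{(i)}$ are bounded by $\Delta$. Hence I may restrict the DP to states with $\norm{\alpha}_\infty, \norm{\beta}_\infty \le \Delta L_\A$, of which there are at most $(2\Delta L_\A + 1)^{r+s}$. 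Using $L_\A = L_B(2r\Delta L_B + 1)^r$ with $L_B = (2s\Delta+1)^s$, a direct estimate gives $\Delta L_\A = (rs\Delta)^{\LO(rs)}$, so the number of states is $(rs\Delta)^{\LO(rs)(r+s)} = (rs\Delta)^{\LO(r^2 s + rs^2)}$.

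For the running time I would count the edges of the DAG. There are $nt$ layers, one per column, and within each layer every state emits one edge per admissible value of $y_k$; the norm bound again caps the relevant values, $|y_k| \le \norm{y}_1 \le L_\A$, so each state has at most $2L_\A + 1 = (rs\Delta)^{\LO(rs)}$ outgoing edges (columns whose $\A$-column vanishes are degenerate and handled separately). Multiplying layers, states, and out-degree and absorbing the out-degree factor into the exponent yields $nt \cdot (rs\Delta)^{\LO(r^2 s + rs^2)}$ operations, as claimed, since computing a longest path is linear in the size of the DAG. If one wishes to honor the constraint $\norm{y}_1 \le L_\A$ exactly rather than implicitly, the running $\ell_1$-budget can be carried as a third, $L_\A$-bounded state coordinate, multiplying the state count by $(rs\Delta)^{\LO(rs)}$, which the exponent $r^2 s + rs^2$ absorbs.

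The step I expect to be the main obstacle is the correctness of the state-space truncation rather than the arithmetic: one must argue that discarding every partial sum of $\ell_\infty$-norm exceeding $\Delta L_\A$ still leaves a path realizing an optimal augmenting vector. This is precisely what Lemma~\ref{lem:cycle_bound_An} buys us, because the desired solution has $\ell_1$-norm at most $L_\A$, so all of its prefix sums, taken in the fixed brick-by-brick order, automatically stay inside the truncated box and no optimal path is lost. A secondary point to verify is that resetting $\beta$ at each brick boundary faithfully encodes the diagonal cycle conditions $B^{(i)} y^{(i)} = 0$, while $\alpha$ correctly carries the global top constraint through to the final layer.
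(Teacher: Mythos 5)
Your proposal is correct and follows essentially the same route as the paper: the same brick-by-brick layered DAG over the $nt$ columns, the same state space of partial sums in $\Z^{r+s}$ truncated to $\norm{\cdot}_\infty \leq \Delta L_\A$ via Lemma~\ref{lem:cycle_bound_An}, the same zero-reset condition at brick boundaries, and the same edge count giving $nt(rs\Delta)^{\mathcal{O}(r^2s+rs^2)}$. Even your closing caveat about honoring $\norm{y}_1 \leq L_\A$ only implicitly mirrors the remark at the end of the paper's proof.
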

\begin{proof}
As $\lambda$ is fixed,
it will be convenient to rewrite the bounds on the variables as
\begin{align}
\label{eq:box_bounds}
    l^\star &\leq y \leq u^\star \quad \text{with} \\
    l^\star_i &= \max \left\lbrace \left\lceil
            \frac{l_i - z_i}{\lambda} \right\rceil,
	    		-L_\A \right\rbrace \nonumber \\
    u^\star_i &= \min \left\lbrace \left\lfloor
            \frac{u_i - z_i}{\lambda} \right\rfloor,
	    		 L_\A \right\rbrace . \nonumber
\end{align}
In particular, $u^\star < \infty$.
First observe that for each $y\in \Z^{nt}$ with $\norm{y}_1 \leq L_\A$, one has
\begin{align}
\label{eq:target_space}
    \norm{\A y}_\infty &\leq \Delta L_\A.
\end{align}
We can decompose $y = (y^{(1)}, \dots, y^{(n)})$
into bricks according to the matrices $B^{(i)}$,
and $B^{(k)} y^{(k)} = 0$ has to hold
independently of the other variables.
Let $U\subseteq \Z^{r+s}$ be the set of integer vectors
of infinity norm at most $\Delta L_\A$.
To find an optimal $y^\star$ for the augmentation IP~\eqref{eq:nfIPker}
we construct the following acyclic digraph.
There are two nodes $0_{start}$ and $0_{target}$,
together with $nt$ copies of the set $U$,
arranged in $n$ blocks of $t$ layers as
\[
    U_1^{(1)}, \dots, U_t^{(1)}, U_1^{(2)},
    	\dots, U_t^{(2)}, \dots, U_1^{(n)} \dots U_t^{(n)},
\]
where the $k$-th block will correspond to the matrix
\[
    M^{(k)} \df
    \begin{pmatrix}
        A^{(k)} \\ B^{(k)}
    \end{pmatrix}
\]
(and thus to the brick $y^{(k)}$ of $y$).
Writing $M^{(k)}_j$ for the $j$-th column of the matrix $M^{(k)}$,
the arcs are given as follows.
There is an arc from $0_{start}$ to $v \in U_1^{(1)}$
if there is an integer $y_1$ such that
\[
    v = y_1 M^{(1)}_1
    \quad \text{and} \quad
    l_1^\star \leq y_1 \leq u_1^\star
\]
holds.
The weight of this arc is $c_1 y_1$.

For two nodes $u \in U^{k}_{i-1}$ and
$v \in U^{k}_{i}$ of two consecutive layers in the same block,
we add an arc $(u,v)$ if there is an integer $y_{(k-1)t + i}$ such that
\[
    v-u = y_{(k-1)t + i} M^{(k)}_i
    \quad \text{and} \quad
    l_{(k-1)t + i}^\star \leq y_{(k-1)t + i} \leq u_{(k-1)t + i}^\star
\]
holds, i.e.\ if we can get from $u$ to $v$
by adding the $i$-th column of $\binom{A^{(k)}}{B^{(k)}}$ multiple times.
The weight is $c_{(k-1)t + i} \cdot y_{(k-1)t + i}$.
It remains to define the arcs between two blocks.
If we fix a path through the whole block $U_1^{(k)},\dots,U_t^{(k)}$,
this corresponds to fixing a brick $y^{(k)}$.
Note that $M^{(k)} y^{(k)}$ has to be zero in the last $s$ components,
since continuing with this path in the next block
will not change the entries of $\A y$
corresponding to $B^{(k)}$ any more.
Thus, for placing an arc between two nodes $u \in U^{k}_{t}$ and
$v \in U^{k+1}_{1}$ in two consecutive layers of different blocks,
also the constraints $u_{r+1}=\dots=u_{r+s} = 0$ have to be fulfilled.

Finally, we add arcs from $u \in U_t^{(n)}$ to $0_{target}$
if there exists an integer $y_{nt}$ such that
\[
    -u = y_{nt} M^{(n)}_t
    \quad \text{and} \quad
    l_{nt}^\star \leq y_{nt} \leq u_{nt}^\star
\]
holds.
Again, the weight is $c_{nt} y_{nt}$.

Clearly, a longest $(0_{start}-0_{target})$-path corresponds to
an optimum solution of the augmentation IP~\eqref{eq:nfIPker},
hence it is left to limit the time needed to find such a path.

The out-degree of each node is bounded by
$u_i^\star - l_i^\star \leq 2L_\A+1$ using~\eqref{eq:box_bounds}.
Therefore, the number of arcs is bounded by
\begin{align*}
    nt \cdot \vert U \vert \cdot (2L_\A +1) &=
	    	nt \left(2 \Delta L_\A + 1\right)^{r+s} (2L_\A + 1) \\
    	&\leq nt \left(2 \Delta L_\A + 1\right)^{r+s+1} \\
	    &\leq nt \left( 2 \Delta L_B \left( 2r \Delta L_B + 1 \right)^{r}
		    + 1 \right)^{r+s+1} \\
    	&= nt \cdot \LO (\Delta r)^{r^2s+rs^2 + o (r^2s+rs^2)} \LO (s)^{r^2+rs+r}.
\end{align*}
Using the Moore-Bellman-Ford
algorithm~\cite{aho1974design, korte2012combinatorial}
to find such a path, the claim follows.

To be very precise, we do not necessarily compute an optimum solution of the IP~\eqref{eq:nfIPker}, as the output $y$ might violate $\norm{y}_1 \leq L_\A$. However, what is required in the following lemma is that the output is nonetheless a feasible augmentation step and improves at least as good as an optimum solution of the IP~\eqref{eq:nfIPker}.
\end{proof}
In the following Lemma we consider the value $\Gamma \df \max_i (u_i - l_i)$.
In the case $u < \infty$, we can estimate
$\Gamma \leq 2^{\varphi}$ and obtain a fixed running time in combination with Lemma~\ref{lem:nfIPker}.
However, if there are variables present that are not bounded from above,
we will combine this lemma with the proximity result
of the next Section~\ref{ssec:proximity}
which allows us to introduce artificial upper bounds $u^\prime < \infty$.
\begin{lem}
\label{lem:nfIPred}
Consider the n-fold IP~\eqref{eq:4} with $u < \infty$.
Let $\Gamma \df \max_i (u_i - l_i)$.
Given an initial feasible solution, we can find an optimum solution of the IP
by solving the augmentation IP \eqref{eq:nfIPker}
for a constant $\lambda \in \Z_+$ at most
\[
    \mathcal{O} \left( nt \log (\Gamma) 
            \left( \log (nt\Gamma) + \varphi\right)
        \right)
\]
times, where $\varphi$ is the logarithm of the largest number occurring in the input.
\end{lem}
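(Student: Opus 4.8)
The plan is to run the Graver-best augmentation scheme sketched in Section~\ref{sec:grav-bases-augm}, coping with the fact that the correct step length is unknown by searching over powers of two. Starting from the given feasible solution $z_0$, each \emph{round} will produce a new feasible point whose objective gap to an optimum $z^*$ has shrunk by a constant factor, and each round will cost $\LO(\log \Gamma)$ calls to the augmentation IP~\eqref{eq:nfIPker}. I first invoke the Cook--Fonlupt--Schrijver guarantee: for the current feasible $z$ there is a Graver-basis element $g$ of $\A$ and an integer $\lambda^\star \geq 1$ such that $z + \lambda^\star g$ is feasible and $\lambda^\star c^T g \geq \frac{1}{2nt}\, c^T(z^* - z)$, since \eqref{eq:4} has $nt$ variables. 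Because $g$ is a nonzero integer vector and both $z$ and $z + \lambda^\star g$ lie in the box $[l,u]$, any coordinate $i$ with $g_i \neq 0$ gives $\lambda^\star \leq \lambda^\star |g_i| \leq u_i - l_i \leq \Gamma$; hence it suffices to try $\lambda$ over the $\LO(\log \Gamma)$ powers of two in $\{1,2,\dots,2^{\lfloor \log \Gamma\rfloor}\}$.

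Next I bound the loss from rounding. Pick the power of two $\bar\lambda$ with $\bar\lambda \leq \lambda^\star < 2\bar\lambda$. Since $z$ and $z+\lambda^\star g$ are feasible and $0 \leq \bar\lambda \leq \lambda^\star$, the integral point $z + \bar\lambda g$ is a convex combination of them and therefore also lies in $[l,u]$; consequently $g$ is a feasible solution of the augmentation IP~\eqref{eq:nfIPker} for $\lambda = \bar\lambda$, its norm bound $\norm{g}_1 \leq L_\A$ holding by Lemma~\ref{lem:cycle_bound_An}. Solving \eqref{eq:nfIPker} for $\lambda=\bar\lambda$ via Lemma~\ref{lem:nfIPker} then returns a $y$ with $c^T y \geq c^T g$, so the step $z \mapsto z + \bar\lambda y$ improves the objective by
\[
    \bar\lambda\, c^T y \;\geq\; \bar\lambda\, c^T g \;\geq\; \tfrac12 \lambda^\star c^T g \;\geq\; \frac{1}{4nt}\, c^T(z^* - z),
\]
using $\bar\lambda \geq \lambda^\star/2$. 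Taking the best outcome over all $\LO(\log\Gamma)$ tried values of $\lambda$ thus shrinks the gap by a factor $1 - \tfrac{1}{4nt}$ per round.

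Finally I count the rounds. After $K$ rounds the gap is at most $(1-\tfrac{1}{4nt})^K\, c^T(z^*-z_0)$, and since $c$ and all iterates are integral the procedure terminates once this quantity drops below $1$. The initial gap satisfies $c^T(z^*-z_0) \leq \norm{c}_\infty \norm{z^*-z_0}_1 \leq 2^{\varphi}\, nt\, \Gamma$, because every coordinate of $z^*-z_0$ has absolute value at most $\Gamma$. Hence $K = \LO\!\big(nt\,(\log(nt\Gamma)+\varphi)\big)$ rounds suffice, and multiplying by the $\LO(\log\Gamma)$ augmentation-IP solves per round yields the claimed bound $\LO\big(nt\log\Gamma\,(\log(nt\Gamma)+\varphi)\big)$.

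The step I expect to be most delicate is the rounding argument: one must simultaneously guarantee that the halved step $z+\bar\lambda g$ stays feasible, which is handled by convexity, and that the norm-restricted IP~\eqref{eq:nfIPker} can actually realize the Cook--Fonlupt--Schrijver improvement, which is exactly why the bound $\norm{g}_1 \leq L_\A$ from Lemma~\ref{lem:cycle_bound_An} is needed as the feasibility certificate for $g$. The remaining estimates are routine.
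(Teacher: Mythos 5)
Your proposal is correct and follows essentially the same route as the paper: a Cook--Fonlupt--Schrijver decomposition yields a Graver step improving the gap by a $\frac{1}{2nt}$ fraction, rounding the step length down to a power of two costs a factor $2$, the number of guesses is $\LO(\log\Gamma)$, and the iteration count follows from the geometric decay of the integral objective gap bounded initially by $nt\,\Gamma\,2^{\varphi}$. The only cosmetic difference is that you bound the number of $\lambda$-guesses by showing $\lambda^\star\le\Gamma$ directly from a nonzero coordinate of $g$, whereas the paper notes that for $\lambda\ge\Gamma$ the augmentation IP has at most one solution; both give the same bound.
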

\begin{proof}
As a first step, we will show that there exists a $\lambda$
such that the improvement we gain from an optimum solution
of the augmentation IP~\eqref{eq:nfIPker} is sufficiently large.

Let $z_0$ be the given feasible solution for the initial IP~\eqref{eq:4},
i.e.\ a vector $z_0 \in \Z^{nt}$ with $\A z_0 = b$,
$l \leq z_0 \leq u$.
Moreover, let $z^\star$ be an optimum integral solution
for the n-fold IP~\eqref{eq:4}.

As $\A (z^\star - z_0) = 0$, a result by Cook, Fonlupt and Schrijver~\cite{cook1986integer} allows us to decompose the vector
\[
    z^\star - z_0 = \sum_{i=1}^{2nt-2} \lambda_i y_i
\]
into an integral conic combination of
sign-compatible Graver basis elements $y_i$ of $\A$.
Note that $z_0 + \lambda_i y_i$ is feasible for each index $i$.
Moreover, by the pigeonhole principle there exists an index $k$ s.t.\
\[
    c^T(\lambda_k y_k) \geq \frac{1}{2nt} c^T (z^\star - z_0),
\]
thus if we add $\lambda_k y_k$ to $z_0$,
we improve at least by a factor of $1/(2nt)$ of the optimum improvement.

However, we do not want to solve~\eqref{eq:nfIPker}
for each possible value of $\lambda$.
Notice that if we replace $\lambda_k$
by $\lambda = 2^{\lfloor \log \lambda_k \rfloor}$,
we only lose a factor of $2$, thus we still improve by
\[
    c^T(\lambda y_k) \geq \frac{1}{4nt} c^T (z^\star - z_0).
\]
Hence, we simply guess $\lambda = 2^i$ for indices $i=0,1,\dots,M$,
compute for each $\lambda$ an optimum solution
$y_\lambda$ of the IP~\eqref{eq:nfIPker},
and pick the best pair $\lambda,y_\lambda$ among all.

It remains to limit the number $M$ of guesses for
$\lambda$ and the number $N$ of iterations.
For limiting $M$,
consider the box constraint
\[
    l - z_0 \leq \lambda y \leq u - z_0.
\]
In each component $i$,
there are only $\lfloor \frac{u_i - l_i}{\lambda}\rfloor + 1$
possible values for $y_i$.
Therefore, if $\lambda \geq \max_i \{u_i-l_i\} \fd \Gamma$,
there is at most one solution.
As we only guess powers of $2$,
the number $M$ of $\lambda$ values we have to guess
is bounded by $\lceil \log \Gamma \rceil + 1$.

We conclude the proof by limiting the number $N$ of augmentation steps.
In each step,
we decrease the distance to an optimal objective function value
by a multiplicative factor of $(4nt-1)/(4nt)$.
(Note that the box constraints ensure that our objective function is bounded.)
Thus $N$ iterations are sufficient whenever
\begin{align*}
    1 &> \left( \frac{4nt-1}{4nt} \right)^N \vert c^T \left( z^\star - z_0 \right) \vert \\
    \Leftrightarrow \quad N &>
            \frac{\log \vert c^T(z^\star - z_0) \vert}
            {\log \left( \frac{4nt}{4nt-1} \right)}
\end{align*}
holds.
Using $\log \frac{4nt}{4nt-1} \geq \frac{1}{4nt} $,
this resolves to the upper bound
\begin{align*}
    \frac{\log \vert c^T(z^\star - z_0) \vert}{\log \left( \frac{4nt}{4nt-1} \right)}
            &\leq 4nt \log \vert c^T(z^\star - z_0) \vert \\
        &\leq 4nt \log \left( nt \max_i \vert c_i \vert \cdot (u_i - l_i)\right) \\
        &\leq 4nt \log \left( nt \Gamma \max_i \vert c_i \vert \right) \\
    	&\in nt \cdot \mathcal{O} \left( \log nt \Gamma  + \varphi \right)
\end{align*}
for $N$.
Thus we have to solve~\eqref{eq:nfIPker} at most
\[
    N \cdot M \in \mathcal{O} \left( nt \log (\Gamma) \left( \log (nt\Gamma) + \varphi\right) \right)
\]
times.
\end{proof}

\subsection{Proximity for $n$-fold IPs}
\label{ssec:proximity}
If no explicit upper bounds are given
(i.e.\ $u_i = \infty$ for some indices $i$),
we cannot bound the number of necessary augmentation steps directly.
To overcome this difficulty,
we will present a proximity result in this section,
stating that for an optimum rational solution $x^\star$,
there exists an optimum integral solution $z^\star$
with $\norm{x^\star - z^\star}_1 \leq nt L_\A$.

With this proximity result,
we can first compute an optimum LP solution $x^\star$,
and then introduce artificial box constraints
$l(x^\star) \leq z \leq u(x^\star)$,
depending on $x^\star$, knowing that at least one optimum IP solution
lies within the introduced bounds.

\begin{lem}
\label{lem:proximity}
Let $x^\star$ be an optimum solution to the LP relaxation of~\eqref{eq:4}.
There exists an optimum integral solution $z^\star$ to~\eqref{eq:4} with
\[
    \norm{x^\star - z^\star}_1
            \leq nt L_\A
            = nt (rs\Delta)^{\LO (rs)}
\]
\end{lem}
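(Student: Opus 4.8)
The plan is to fix $z^\star$ to be an optimal integral solution of~\eqref{eq:4} that additionally minimizes $\norm{x^\star - z^\star}_1$ among all optimal integral solutions, and then to prove that this minimum is at most $nt L_\A$. Set $y \df x^\star - z^\star$. Since $\A x^\star = b = \A z^\star$, the difference satisfies $\A y = 0$, so $y$ lies in the real kernel of $\A$. (If $y = 0$ we are done, so assume $y \neq 0$.) Note that only optimality of $x^\star$ for the LP is used below, not that it is a vertex.

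First I would decompose $y$ \emph{conformally} into Graver basis elements with nonnegative real coefficients. Consider the cone
\[
    C \df \{ v \in \R^{nt} : \A v = 0 \text{ and } v \text{ is sign-compatible with } y \}.
\]
This cone is rational and pointed, hence generated by its extreme rays; each extreme ray is spanned by a primitive integral vector, which is an indecomposable cycle of $\A$ and therefore a Graver basis element sign-compatible with $y$. As $y \in C$, Carath\'eodory's theorem for cones yields
\[
    y = \sum_{i=1}^{k} \lambda_i g_i, \qquad \lambda_i > 0, \quad k \leq \dim C \leq nt,
\]
where every $g_i$ is a Graver basis element of $\A$ that is sign-compatible with $y$.

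The decisive step is a case distinction on the sizes of the $\lambda_i$. Suppose some $\lambda_i \geq 1$. Because all $g_j$ are sign-compatible with $y$, in each coordinate $p$ the entry $\lambda_i g_{i,p}$ has the same sign as $y_p$ and satisfies $\vert \lambda_i g_{i,p} \vert \leq \vert y_p \vert$; as $\lambda_i \geq 1$, this places $g_i$ conformally between $0$ and $y$, so both $z^\star + g_i$ and $x^\star - g_i$ remain in the box $l \leq \cdot \leq u$. Since $\A g_i = 0$, the first is a feasible integral solution and the second a feasible LP solution. Comparing objective values now yields a contradiction in every case: $c^T g_i > 0$ makes $z^\star + g_i$ better than $z^\star$; $c^T g_i < 0$ makes $x^\star - g_i$ better than $x^\star$; and $c^T g_i = 0$ makes $z^\star + g_i$ an optimal integral solution that is strictly closer to $x^\star$ in $\ell_1$-norm (as $\norm{y - g_i}_1 = \norm{y}_1 - \norm{g_i}_1$ by conformality). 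Hence no $\lambda_i$ can be at least $1$, i.e.\ all $\lambda_i < 1$.

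It then only remains to read off the bound, and here conformality is essential: since all $g_i$ are pairwise sign-compatible there is no cancellation, so $\norm{y}_1 = \sum_{i=1}^{k} \lambda_i \norm{g_i}_1$. Using $\lambda_i < 1$, the estimate $k \leq nt$, and the Graver norm bound $\norm{g_i}_1 \leq L_\A$ from Lemma~\ref{lem:cycle_bound_An}, I conclude
\[
    \norm{x^\star - z^\star}_1 = \norm{y}_1 < \sum_{i=1}^{k} \norm{g_i}_1 \leq nt\, L_\A = nt\,(rs\Delta)^{\LO(rs)}.
\]
I expect the main obstacle to be the conformal real decomposition and the verification that a Graver element with coefficient $\lambda_i \geq 1$ genuinely fits inside the box spanned by $z^\star$ and $x^\star$; once the feasibility of $z^\star + g_i$ and $x^\star - g_i$ is established, the three-way sign argument and the no-cancellation identity follow routinely.
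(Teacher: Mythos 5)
Your proof is correct, but it takes a genuinely different route from the paper's. The paper deals with the fractionality of $x^\star - z^\star$ by rounding $x^\star$ towards $z^\star$, packaging the fractional remainders into an auxiliary matrix $\A^\prime$ so that $(\A,-\A^\prime)$ is again an $n$-fold matrix with parameters $r,s,2t$, applying the \emph{integral} Cook--Fonlupt--Schrijver decomposition to the integral vector $\bigl(z^\star - \lfloor x^\star\rceil, \mathbf{1}_{nt}\bigr)$, and extracting by pigeonhole a cycle of $\A$ dominated by $z^\star - \lfloor x^\star\rceil$; the same two-case sign argument then closes the proof. You instead embrace the fractionality: you decompose $y = x^\star - z^\star$ directly as a conic combination of at most $nt$ conformal Graver elements with \emph{real} coefficients via Carath\'eodory, rule out any coefficient $\lambda_i \geq 1$ by your three-way sign argument (the extra case $c^T g_i = 0$ is handled correctly via the no-cancellation identity), and read off the bound from $\lambda_i < 1$. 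Your version avoids the auxiliary matrix entirely and is arguably cleaner; the paper's version only ever invokes the integral decomposition theorem and so needs no statement about real conic generation. One small caveat in your write-up: the cone $C = \{v : \A v = 0,\ v_i y_i \geq 0\ \forall i\}$ need not be pointed when $y$ has zero coordinates (the sign condition is then vacuous there), so you should intersect $\ker(\A)$ with a fixed closed orthant containing $y$; the extreme rays of that pointed cone are circuits, hence Graver elements, and since all summands share the orthant, coordinates where $y_p = 0$ force $g_{i,p} = 0$ anyway, so the domination argument goes through unchanged. With that one-sentence repair the argument is complete and yields the slightly stronger strict inequality $\norm{x^\star - z^\star}_1 < nt\,L_\A$.
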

\begin{proof}
Let $x^\star$ be an optimum vertex solution
of the LP relaxation of~\eqref{eq:4}
and $z^\star$ be an optimum (integral) solution of~\eqref{eq:4}
that minimizes the $l_1$-distance to $x^\star$.

We say a vector $y$ \emph{dominates} a cycle $y^\prime$
if they are sign-compatible and $\vert y^\prime_i \vert \leq \vert y_i \vert$ for each $i$. The idea is to show that if the $l_1$-distance is too large, we can find a cycle dominated by $z^\star - x^\star$ and either add it to $x^\star$ or subtract it from $z^\star$ leading to a contradiction in both cases.
However, as $z^\star - x^\star$ is fractional, we cannot decompose it directly but have to work around the fractionality.

To this end, denote with $\lfloor x^\star \rceil$ the vector
$x^\star$ rounded towards $z^\star$ i.e. $\lfloor x^\star_i \rceil = \lfloor x^\star_i \rfloor$ if $z^{\star}_i \leq x^{\star}_i$ and $\lfloor x^\star_i \rceil = \lceil x^\star_i \rceil$ otherwise.
Denote with $\{x^\star\}$ the fractional rest i.e.  $\{x^\star\} = x^\star - \lfloor x^\star \rceil $.
Consider the equation
\[
     \A \left( z^\star - x^\star \right) =  \A \left( z^\star - \lfloor x^\star \rceil \right) - \A \{x^\star\} = 0.
\]
Consider the integral vector $\A \{x^\star\}$.
For each index $i$, we will obtain an integral vector $w_i$ out of $\{x^\star\}_i \A_i$ by rounding the entries suitably such that
\[
    \A \{x^\star\} = \sum_{i=1}^{nt} \left( \{x^\star\}_i \A_i \right) = w_1 + \dots + w_{nt}.
\]
To be more formal, fix an index $j$ and let $a_1,\dots,a_{nt}$ denote the $j$-th entry of the vectors $\{x^\star\}_i \A_i$.
Define
\[
f \df \left( \sum_{i=1}^{nt} a_i - \lfloor a_i \rfloor \right) \in \Z_+
\]
as the sum of the fractional parts.
We round up $f$ of the fractional entries $a_i$, and we round down all other fractional entries. If some $a_i$ is integral already, it remains unchanged.
After doing this for each component $j$, we obtain the vectors $w_i$ as claimed.
As $\norm{\{x^\star\}}_\infty \leq 1$, each vector $w_i$ is dominated by either $\A_i$ or $-\A_i$, in particular it inherits the zero entries.

Define the matrix
\[
    \A^\prime \df \left( w_1, \dots , w_{nt} \right).
\]
After permuting the columns, the matrix $(\A,-\A^\prime)$ has $n$-fold structure with parameters $r,s,2t$.
As Lemma~\ref{lem:cycle_bound_An} does not depend on $t$, the Graver basis elements of $(\A,-\A^\prime)$ are bounded by $L_\A$ as well.
We can now identify
\begin{align*}
    \A (z^\star - x^\star) &= \left( \A, - \A^\prime \right)
        \begin{pmatrix}
            z^\star - \lfloor x^\star \rceil \\ \mathbf{1}_{nt}
        \end{pmatrix} = 0,
\end{align*}
and decompose the integral vector $\binom{z^\star - \lfloor x^\star \rceil}{\mathbf{1}_{nt}}$ into Graver basis elements of $l_1$-norm at most $L_\A$.
But if 
\[
    nt L_\A < \norm{z^\star - x^\star}_1 \leq \norm{\begin{pmatrix}
            z^\star - \lfloor x^\star \rceil \\ \mathbf{1}_{nt}
        \end{pmatrix}}_1,
\]
we obtain at least $nt+1$ cycles.
As $\norm{\mathbf{1}_{nt}}_1 = nt$, this grants a cycle $\binom{\bar{y}}{\mathbf{0}_{nt}}$ and hence a cycle $\bar{y}$ of $\A$.

{\bf Case 1:} $c^T \bar{y} \leq 0$: 
As $\bar{y}$ is dominated by $z^\star - \lfloor x^\star \rceil$,
removing cycle $\bar{y}$ from the solution gives a new solution $\bar{z} = z^\star - \bar{y}$ with $c^T \bar{z} \geq c^T z^\star$, which is closer to the fractional solution $x^\star$. However, this contradicts the fact that $z^\star$ was chosen to be a solution with minimal distance $\norm{x^\star - z^\star}_1$.

{\bf Case 2:} $c^T \bar{y} > 0$:
As we rounded $x^\star$ towards $z^\star$ and $\bar{y}$ is dominated by $z^\star - \lfloor x^\star \rceil$,
we can add $\bar{y}$ to $x^\star$ and obtain a better solution,
contradicting its optimality.
\end{proof}

We are now able to state our main theorem regarding the running time of our algorithm to solve a generalized n-foldIP.
\begin{thm} \label{thm:main}
The generalized $n$-fold IP~\eqref{eq:4}
can be solved in time
\[
    n^2t^2 \varphi \log^2 nt \cdot
            \left(rs\Delta\right)^{\mathcal{O} (r^2s + rs^2)}
                + \textbf{LP}
\]
where $\varphi$ denotes the logarithm of the largest number
occurring in the input,
and $\textbf{LP}$ denotes the time needed
to solve the LP relaxation of~\eqref{eq:4}.
%
\end{thm}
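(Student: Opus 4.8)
The plan is to assemble the theorem from the three technical ingredients already in place: the proximity bound (Lemma~\ref{lem:proximity}), the per-step augmentation solver (Lemma~\ref{lem:nfIPker}), and the bound on the number of augmentation steps (Lemma~\ref{lem:nfIPred}). The catch is that the iteration count in Lemma~\ref{lem:nfIPred} depends on $\Gamma=\max_i(u_i-l_i)$, which is infinite whenever some $u_i=\infty$, and Lemma~\ref{lem:nfIPred} is stated only for $u<\infty$. So the first move is to solve the LP relaxation of~\eqref{eq:4} once, in time $\textbf{LP}$, obtaining an optimal vertex solution $x^\star$ (and detecting infeasibility or unboundedness there). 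By Lemma~\ref{lem:proximity} there is an integral optimum $z^\star$ with $\norm{x^\star-z^\star}_1\leq nt L_\A$, so I would tighten the bounds to $l'_i=\max\{l_i,\lceil x^\star_i-nt L_\A\rceil\}$ and $u'_i=\min\{u_i,\lfloor x^\star_i+nt L_\A\rfloor\}$. This keeps at least one integral optimum feasible while forcing $u'<\infty$ and $\Gamma'\leq 2nt L_\A+1=nt(rs\Delta)^{\LO(rs)}$, which is exactly the finiteness needed to invoke Lemma~\ref{lem:nfIPred}.

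With a finite box in hand I still need an initial feasible integer point to seed the augmentation scheme, and producing one is the step I expect to be the main obstacle, since satisfying $\A z=b$ exactly is nontrivial. I would obtain it by a standard first phase: round $x^\star$ to $x_0\df\lfloor x^\star\rceil$ (so $\norm{x^\star-x_0}_\infty\leq 1$), observe that the residual $b-\A x_0=\A(x^\star-x_0)$ is small in each brick because each entry involves at most $t$ columns of absolute value $\leq\Delta$, and then set up an auxiliary IP that absorbs this residual with a few extra slack columns inserted into each block. This auxiliary program again has generalized $n$-fold structure with parameters of the same order (the slacks enlarge $t$ only by a constant factor and leave $r,s,\Delta$ essentially unchanged), it is feasible by construction, its optimum value is zero precisely when~\eqref{eq:4} is feasible, and its variable ranges stay polynomially bounded. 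Solving it by the very same augmentation machinery therefore fits within the target budget and returns a feasible starting solution $z_0$ lying in the tightened box.

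Finally I would run Lemma~\ref{lem:nfIPred} on the boxed instance from $z_0$. Substituting $\Gamma=\Gamma'=nt(rs\Delta)^{\LO(rs)}$ gives $\log\Gamma=\LO(\log nt+rs\log(rs\Delta))$ and $\log(nt\Gamma)=\LO(\log nt+rs\log(rs\Delta))$, so the number of augmentation-IP solves is $\LO\!\big(nt\,(\log nt+rs\log(rs\Delta))(\log nt+rs\log(rs\Delta)+\varphi)\big)$, and each solve costs $nt(rs\Delta)^{\LO(r^2s+rs^2)}$ by Lemma~\ref{lem:nfIPker}. Multiplying the two already produces the factor $n^2t^2$. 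The only remaining work is bookkeeping: every occurrence of $rs\log(rs\Delta)$ is the logarithm of $(rs\Delta)^{rs}$ and hence is absorbed into the exponential factor $(rs\Delta)^{\LO(r^2s+rs^2)}$, perturbing only the hidden constant in the exponent, while the surviving purely logarithmic terms collapse into $\varphi\log^2 nt$ under the harmless assumptions $\varphi,\log nt\geq 1$. This yields $n^2t^2\varphi\log^2 nt\cdot(rs\Delta)^{\LO(r^2s+rs^2)}$, and adding the one-time $\textbf{LP}$ cost gives the claimed bound. Beyond the first-phase feasibility construction, the argument is pure substitution and absorption of lower-order factors.
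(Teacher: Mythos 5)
Your proposal is correct and follows the same skeleton as the paper's proof: solve the LP once, invoke Lemma~\ref{lem:proximity} to install finite artificial bounds with $\Gamma \leq 2ntL_\A+1$, multiply the iteration count of Lemma~\ref{lem:nfIPred} by the per-call cost of Lemma~\ref{lem:nfIPker}, and absorb the $(rs\Delta)^{\LO(rs)}$ contributions of $\log\Gamma$ into the exponential factor. The one place you genuinely deviate is the phase-one construction. The paper appends a diagonal slack block $D$ with $D_{ii}=b_i$, minimizes $\mathbf{1}^Ty$ starting from $(x,y)=(0,\mathbf{1}_{r+ns})$, and reuses the same algorithm; you instead round the LP optimum to an integral point $x_0$ in the box and absorb the residual $b-\A x_0$ with unit slack columns. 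Your variant is arguably cleaner on two counts: it does not insert the (possibly $2^{\varphi}$-sized) entries $b_i$ into the constraint matrix of the auxiliary IP, and it does not implicitly require $l\leq 0\leq u$ for the starting point to be feasible. One small inaccuracy: the residual is \emph{not} bounded brick-wise by $t\Delta$ in the first $r$ rows, since the top block row $(A^{(1)},\dots,A^{(n)})$ involves all $nt$ columns, so the residual there can be as large as $nt\Delta$; this is still polynomially bounded, so the slack ranges contribute only $\LO(\varphi+\log nt)$ to the logarithmic factors and the argument goes through unchanged.
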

\begin{proof}
By Lemma~\ref{lem:nfIPred}, we can solve~\eqref{eq:4}
by solving~\eqref{eq:nfIPker} at most 
\[
    \mathcal{O} \left( nt \log \Gamma 
            \left( \log nt\Gamma + \varphi\right)
        \right)
\]
times, where we can use the bound
$\Gamma \leq 2nt L_\A + 1 = nt (rs\Delta)^{\LO (rs)}$,
after introducing artificial upper bounds $u^\prime$
by Lemma~\ref{lem:proximity},
if necessary.

Given a feasible solution, we can solve~\eqref{eq:nfIPker} in time
\[
    nt \cdot \left(rs\Delta\right)^{\LO (r^2s + rs^2)}
\]
by Lemma~\ref{lem:nfIPker}.
This yields a running time of
\[
    n^2t^2 \varphi \log^2 nt \cdot
        \left(rs\Delta\right)^{\LO (r^2s + rs^2)},
\]
provided a feasible solution.

For finding an initial feasible solution,
we introduce slack variables $y$ and,
for a diagonal matrix $D$ with $D_{ii} = b_i$,
consider the IP
\begin{align*}
    \min \mathbf{1}_{nt}^T y &  \\
    \left(\A , D \right) (x,y)  &= b \nonumber \\
    l \leq x &\leq u \nonumber \\
    x &\in \Z^{nt} \nonumber \\
    y &\geq 0 \nonumber \\
    y &\in \Z^{r + ns}. \nonumber
\end{align*}
Note that $(x,y) = (0,1_{r+ns})$ is a feasible solution.
Moreover, an (optimum) solution with objective function value $0$ corresponds
to a feasible solution for~\eqref{eq:4}.
As we can permute the columns (and the variables accordingly),
this is again an $n$-fold IP, hence we can use our algorithm.
This does not change the running time in terms of Landau symbols.
\end{proof}

\subsection{Separable Convex Objective Functions}
\label{ssec:sep_conv_function}
In this section we consider a separable convex objective function in combination with finite upper bounds $u < \infty$.
A convex function $f: \R^{nt} \rightarrow \R$ is called separable
if there are convex functions $f_i: \R \rightarrow \R$ s.t.\
$f(x) = \sum_{i=1}^{nt} f_i(x_i)$.
Henceforth, we consider the problem
\begin{align}
\label{eq:nfIPconv}
    \min f(x) & \\
    \A x &= b \nonumber \\
    l \leq x &\leq u \nonumber \\
    x &\in \Z^{nt} \nonumber
\end{align}
and show a result similar to Theorem~\ref{thm:main}, see Corollary~\ref{cor:sep_convex_function}.

Let us discuss the differences between a linear function $c^Tx$
and a separable convex function $f(x)$.
First of all, for a linear objective function,
minimizing and maximizing are equivalent,
as one can multiply the objective function by $-1$.
As $-f$ is concave for a convex function $f$,
it will be crucial for this chapter that we want to minimize $f$.

By now, the objective function was used for two things.
In Lemma~\ref{lem:nfIPker}, the constructed graph has weights
according to the objective function.
As the new objective function $f$ is separable,
we have no loss here and a shortest path in the graph
defined with weights according to $f$ is still an optimum solution.
Giving more detail
(and simplifying the notation of the proof of Lemma~\ref{lem:nfIPker}),
let $u$ be a node in the graph we are considering
and $g$ a column we want to add,
the arc $u \rightarrow u + \lambda g$ will have costs
$f_j(z_j + \lambda) - f_j(z_j)$ for the corresponding index $j$ and initial feasible solution $z$ (of the IP~\eqref{eq:4}).
Instead of a longest path, we are now looking for a shortest path.
As the graph is acyclic, this does not make a difference.

More important, we used the objective function
in order to limit the number of iterations.
A key ingredient to do this for $f$ is the following Lemma from
De Loera et alii.

\begin{lem}
[\cite{loera2012algebraic}, Lemma 3.3.1]
\label{lem:deloera}
Let $f(z) \df \sum_{j=1}^k f_j(z_j)$ be separable convex, let $z \in \R^k$,
and let $y_1,\dots,y_l \in \R^k$ be vectors
with the same sign pattern from $\{\leq 0, \geq 0\}^k$;
that is, they belong to a common orthant of $\R^k$.
Then we have
\begin{align}
\label{eq:conv_ineq}
    f \left( z + \sum_{i=1}^l \lambda_i y_i \right) - f(z)
            &\geq \sum_{i=1}^l \lambda_i \left( f(z + y_i) - f(z) \right)
\end{align}
for arbitrary integers $\lambda_1,\dots,\lambda_l \in \Z_+$.
\end{lem}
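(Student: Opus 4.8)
The plan is to exploit separability to reduce the $k$-dimensional inequality to a one-dimensional statement, and then to recognize the resulting inequality as a superadditivity property of convex functions that vanish at the origin. Since $f(x) = \sum_{j=1}^k f_j(x_j)$ and both sides of~\eqref{eq:conv_ineq} are sums over the coordinates $j$, it suffices to prove, for each fixed $j$,
\[
    f_j\Bigl(z_j + \sum_{i=1}^l \lambda_i y_{i,j}\Bigr) - f_j(z_j)
        \geq \sum_{i=1}^l \lambda_i\bigl(f_j(z_j + y_{i,j}) - f_j(z_j)\bigr),
\]
and then to sum over $j$. The crucial point is that because all $y_i$ lie in a common orthant, for each fixed coordinate $j$ the numbers $y_{1,j},\dots,y_{l,j}$ all share the same sign; this is exactly the hypothesis that makes the one-dimensional inequality true, and I would emphasize that it is indispensable.

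For the one-dimensional statement, fix $j$ and write $h(a) \df f_j(z_j + a) - f_j(z_j)$, which is convex with $h(0) = 0$. Abbreviating $a_i \df y_{i,j}$, the goal becomes
\[
    h\Bigl(\sum_{i=1}^l \lambda_i a_i\Bigr) \geq \sum_{i=1}^l \lambda_i h(a_i),
\]
where all $a_i$ have the same sign. I would treat the case $a_i \geq 0$ first; the case $a_i \leq 0$ follows by applying the same argument to $\bar h(a) \df h(-a)$, which is again convex with $\bar h(0) = 0$, and then re-substituting $-a_i$.

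The key step is that a convex function vanishing at $0$ is \emph{superadditive} on the nonnegative ray: for $a, b \geq 0$ not both zero, writing $a = \tfrac{a}{a+b}(a+b) + \tfrac{b}{a+b}\cdot 0$ and using convexity together with $h(0)=0$ gives $h(a) \leq \tfrac{a}{a+b}h(a+b)$, and symmetrically $h(b) \leq \tfrac{b}{a+b}h(a+b)$; adding yields $h(a+b) \geq h(a) + h(b)$. Iterating this gives $h(\sum_i c_i) \geq \sum_i h(c_i)$ for any finitely many $c_i \geq 0$, and in particular $h(\lambda a) \geq \lambda h(a)$ for every integer $\lambda \geq 0$. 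Combining the two observations,
\[
    h\Bigl(\sum_i \lambda_i a_i\Bigr) \geq \sum_i h(\lambda_i a_i) \geq \sum_i \lambda_i h(a_i),
\]
which is exactly what is wanted.

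I do not expect a genuine obstacle here: the argument is elementary once the superadditivity observation is in place, and the reduction to one dimension is immediate from separability. The only thing to watch is the orthant hypothesis, which is precisely the delicate point. Superadditivity of $h$ holds only along a single ray emanating from the origin, so the coordinatewise reduction requires that the signs never mix within a coordinate; without the common-orthant assumption the inequality can fail, and the proof correctly breaks down exactly there.
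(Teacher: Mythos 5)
Your proof is correct and follows essentially the same route as the paper's: reduce to one dimension via separability, then exploit convexity relative to the base point $z$, using the common-sign hypothesis to keep all increments on a single ray. The only cosmetic difference is that the paper applies one convex-combination step with weights $\alpha_i = y_i/\sum_j y_j$ summing to $1$ (absorbing the multiplicities $\lambda_i$ by replication), whereas you iterate a pairwise superadditivity inequality; both rest on the identical observation that a convex function vanishing at the base point is superadditive along a ray.
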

\begin{proof}
We will show this inequality in one dimension;
the general result follows by separability of $f$.
Let $f: \R \rightarrow \R$ be a convex function and $z \in \R$ a fixed number.
Choose numbers $y_1,\dots,y_l \in \R$ with the same sign.
There exist parameters $\alpha_1,\dots,\alpha_l \in [0,1]$ s.t.\
\begin{alignat*}{2}
    & \quad & (z + y_i) &=
            (1 - \alpha_i) z + \alpha_i \left( z + \sum_{j=1}^l y_j \right) \\
    \Rightarrow & & \sum_{i=1}^l y_i &=
            \sum_{i=1}^l \alpha_i \sum_{j=1}^l y_j \\
    \Rightarrow & & \sum_{i=1}^l \alpha_i &= 1.
\end{alignat*}
Using the convexity of $f$ we find
\begin{alignat*}{2}
    & \quad & f(z + y_i) - f(z) &\leq
            \alpha_i f(z + \sum_{j=1}^l y_j) - \alpha_i f(z) \\
    \Rightarrow & & \left( \sum_{i=1}^l f(z + y_i) - f(z) \right)
            &\leq f(z + \sum_{j=1}^l y_j) - f(z).
\end{alignat*}
The claim follows by writing $\lambda y = \sum_{i=1}^\lambda y$ for any positive integer $\lambda$.
\end{proof}

We are now able to show the following corollary
as the main result of this section.
The running time is similar to the one in Theorem~\ref{thm:main}.
However, we obtain an additional factor that depends logarithmically on the parameter
$M \df \max_{x,y \in P} f(x) - f(y)$, where $P$ is the set of all integral feasible solutions of the IP~\eqref{eq:nfIPconv}.
\begin{cor}
\label{cor:sep_convex_function}
Consider~\eqref{eq:nfIPconv} with finite bounds $l,u < \infty$ and a separable convex function $f$ mapping $\Z^{nt}$ to $\Z$.
Let $P$ the set of feasible integral points for~\eqref{eq:nfIPconv},
and let $M = \max_{x,y \in P} f(x) - f(y)$.
We can solve~\eqref{eq:nfIPconv} in time
\[
    n^2t^2 \varphi \log M (rs \Delta)^{\mathcal{O}(r^2 s + r s^2)}.
\]
\end{cor}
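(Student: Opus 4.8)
The plan is to mirror the augmentation scheme that proves Theorem~\ref{thm:main} via Lemma~\ref{lem:nfIPred}, and to replace the two places where linearity of $c^Tx$ was exploited by the separable–convexity estimate of Lemma~\ref{lem:deloera}. Since the bounds are finite, we have $\Gamma \df \max_i(u_i - l_i) \leq 2^\varphi$, so no proximity argument (Lemma~\ref{lem:proximity}) is needed to truncate the box, and an initial feasible point is produced by the auxiliary $n$-fold feasibility IP with the linear objective $\mathbf{1}^T y$ exactly as in the proof of Theorem~\ref{thm:main}. The per-iteration cost is unchanged: as already noted preceding the statement, the acyclic digraph of Lemma~\ref{lem:nfIPker} carries over once the arc corresponding to adding $\lambda$ times the $j$-th column in brick $j$ is given weight $f_j(z_j + \lambda y_j) - f_j(z_j)$; separability makes the total weight of a path equal to $f(z + \lambda y) - f(z)$, and since the graph is acyclic we compute a shortest instead of a longest path. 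Hence for a fixed $\lambda$ one augmentation IP is still solved in time $nt\,(rs\Delta)^{\LO(r^2s + rs^2)}$.

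The heart of the proof is to show that a single augmentation step still closes an $\Omega(1/(nt))$ fraction of the optimality gap. Given the current feasible $z_0$ and an optimum $z^\star$, I would decompose $z^\star - z_0 = \sum_{i=1}^{2nt-2}\lambda_i y_i$ into sign-compatible Graver elements via~\cite{cook1986integer}. The key move is to apply Lemma~\ref{lem:deloera} \emph{not} to the $y_i$ with multipliers $\lambda_i$, but to the scaled vectors $\lambda_i y_i$ with all coefficients equal to $1$; these lie in a common orthant because the $y_i$ are sign-compatible. This yields $f(z^\star) - f(z_0) \geq \sum_i \bigl(f(z_0 + \lambda_i y_i) - f(z_0)\bigr)$, i.e.\ $\sum_i \bigl(f(z_0) - f(z_0 + \lambda_i y_i)\bigr) \geq f(z_0) - f(z^\star)$. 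By the pigeonhole principle some index $k$ satisfies $f(z_0) - f(z_0 + \lambda_k y_k) \geq \tfrac{1}{2nt-2}\bigl(f(z_0) - f(z^\star)\bigr)$, and $z_0 + \lambda_k y_k$ is feasible because $\lambda_k y_k$ is dominated by $z^\star - z_0$.

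I would then handle the power-of-two guessing precisely as in Lemma~\ref{lem:nfIPred}. Replacing $\lambda_k$ by $\lambda = 2^{\lfloor \log \lambda_k \rfloor} \in [\lambda_k/2,\, \lambda_k]$ and using convexity of $\mu \mapsto f(z_0 + \mu y_k)$ along $[0,\lambda_k]$, the step $\lambda y_k$ still improves by at least $\tfrac{\lambda}{\lambda_k}\bigl(f(z_0) - f(z_0 + \lambda_k y_k)\bigr) \geq \tfrac12\cdot\tfrac{1}{2nt-2}\bigl(f(z_0) - f(z^\star)\bigr)$. As $y_k$ is a Graver-bounded, feasible direction for the augmentation IP run with this $\lambda$, the shortest-path computation returns a step at least this good, so each round shrinks the gap by a factor $1 - \Theta(1/(nt))$. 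Since $f$ maps feasible points to $\Z$ and the initial gap is at most $M$, after $\LO(nt \log M)$ rounds the gap drops below $1$, hence to $0$. Guessing $\lambda$ over the $\LO(\log \Gamma) = \LO(\varphi)$ powers of two up to $\Gamma \leq 2^\varphi$ and multiplying the $\LO(nt\log M)$ rounds, the $\LO(\varphi)$ guesses, and the per-solve cost $nt\,(rs\Delta)^{\LO(r^2s+rs^2)}$ gives the claimed bound $n^2t^2 \varphi \log M\,(rs\Delta)^{\LO(r^2s + rs^2)}$.

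I expect the main obstacle to be exactly the application of Lemma~\ref{lem:deloera} in the correct orientation. The naive version, applied to the $y_i$ with coefficients $\lambda_i$, produces the inequality in the unusable direction: it only bounds the full-step improvement \emph{from above} by $\lambda_i$ times a single-step improvement, which gives no lower bound on any achievable augmentation. Recognizing that one must instead treat the scaled vectors $\lambda_i y_i$ as the atoms (unit coefficients), and then separately absorbing the power-of-two rounding through one-dimensional convexity, is the only nonroutine point; everything else is a transcription of Lemma~\ref{lem:nfIPred} and Theorem~\ref{thm:main} with $\log M$ in place of the linear gap estimate.
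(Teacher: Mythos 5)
Your proposal is correct and follows essentially the same route as the paper: the per-iteration dynamic program and the feasibility phase are reused verbatim, Lemma~\ref{lem:deloera} is applied to the scaled atoms $\lambda_i y_i$ with unit coefficients (which is exactly how the paper obtains $f(z_0)-f(z^\star) \leq \sum_i \bigl(f(z_0)-f(z_0+\lambda_i y_i)\bigr)$ before pigeonholing), and the power-of-two rounding is absorbed by one-dimensional convexity just as in the paper's $\gamma$-argument, yielding $\LO(nt\log M)$ rounds. Your remark about the ``wrong orientation'' of the naive application is a fair reading of the subtlety the paper glosses over when it says it multiplies~\eqref{eq:conv_ineq} by $-1$.
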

\begin{proof}
We already saw that solving~\eqref{eq:nfIPker} is not affected
by the objective function.
Moreover,
the first step for finding an initial feasible solution is not affected,
as we still do this with optimizing a linear function.
The number of $\lambda$ values for which we
have to solve an IP in each iteration
is still limited by the box constraints,
hence in $\mathcal{O}(\varphi)$.

Thus it remains to limit the number of iterations.
Let $z_0$ be the current solution and
$z^\star = z_0 + \sum_i \lambda_i y_i$ an optimum solution with
Graver basis elements $y_i$.
Multiplying~\eqref{eq:conv_ineq} by $-1$, we get
\[
0 \leq f(z_0) - f ( z^\star )
        \leq \sum_{i=1}^l \left( f(z_0) - f(z_0 + \lambda_i y_i) \right).
\]
Combining this with the result by Cook, Fonlupt and Schrijver~\cite{cook1986integer},
there is a Graver basis element $y$ together with
a multiplicity $\lambda \in \Z_+$ s.t.\
\[
 f(z_0) - f(z_0 + \lambda y)
        \geq \frac{1}{2nt} \left( f(z_0) - f(z^\star) \right).
\]
However, we only solve the type~\eqref{eq:nfIPker} program
for $\lambda$ values that are a power of two.
Hence, we have to ensure that even among those
we find a strong enough improvement.

For this sake let $\widetilde{\lambda} \df 2^{\lfloor \log \lambda \rfloor}$
and choose $1/2 < \gamma \leq 1$ in such a way that
$\widetilde{\lambda} = \gamma \lambda$.
Using convexity once more yields
\begin{align*}
    f(z_0) - f(z_0 + \widetilde{\lambda} y) &\geq 
        f(z_0) - \left[ (1-\gamma)f(z_0) + \gamma f(z_0 + \lambda y) \right] \\
        &= \gamma \left( f(z_0) - f(z_0 + \lambda y)\right) \\
        &\geq \frac12 \left( f(z_0) - f(z_0 + \lambda y)\right) \\
        &\geq \frac{1}{4nt} \left( f(z_0) - f(z^\star) \right).
\end{align*}
As integral vectors are mapped to integral vectors,
we can limit the number of iterations in the same manner as in Lemma~\ref{lem:nfIPred}.
This yields $\mathcal{O}(nt \log M)$ iterations,
where $M = \max_{x,y \in P} (f(x) - f(y))$,
finishing the proof.
\end{proof}

\section{Tree-Fold IPs}
\label{sec:tree-fold}
Given matrices $A^{(i)} \in \Z^{m_i \times n}$ and vectors $b^{(i)} \in \Z^{m_i}$ for $i=1, \ldots , N$ and $c,l,u \in \Z^n$ for some $n, N \in \Z_+$, $m_1,\dots,m_N \in \Z^+$.
We consider the following IP consisting of a system of (systems of) linear equations
\begin{align}
\label{eq:tree-fold-system}
    \max c^T x \\
    A^{(1)} x &= b^{(1)} \nonumber \\
    A^{(2)} x &= b^{(2)} \nonumber \\
        &{\ \vdots} \nonumber \\
    A^{(N)} x &= b^{(N)} \nonumber \\
        l \leq x &\leq u \nonumber\\
        x &\in \Z^n. \nonumber 
\end{align}
Define the support of $A^{(i)}$ as the index set
of all non-zero columns of $A^{(i)}$,
\[
\supp \left( A^{(i)} \right) \df \left\lbrace j \vert \ A^{(i)}_j \neq 0 \right\rbrace,
\]
where $A^{(i)}_j$ denotes the $j$-th column of $A^{(i)}$.
We call~\eqref{eq:tree-fold-system} a tree-fold IP, if the following two conditions hold.
\begin{itemize}
\item For each pair of indices $i,j$ one of the three conditions $\supp ( A^{(i)} ) \subseteq \supp ( A^{(j)} )$, $\supp ( A^{(i)} ) \supseteq \supp ( A^{(j)} )$, or
$\supp ( A^{(i)} ) \cap \supp ( A^{(j)} ) = \emptyset$ is fulfilled.
\item There is an index $k$ s.t.\ for all $i$ we have $\supp ( A^{(i)} ) \subseteq \supp ( A^{(k)} )$.
\end{itemize}

Intuitively, the partial ordering induced by the support forms a tree $T$ on the matrices $A^{(i)}$ (if the arcs stemming from transitivity are omitted). The root of this tree is the matrix with the largest support.
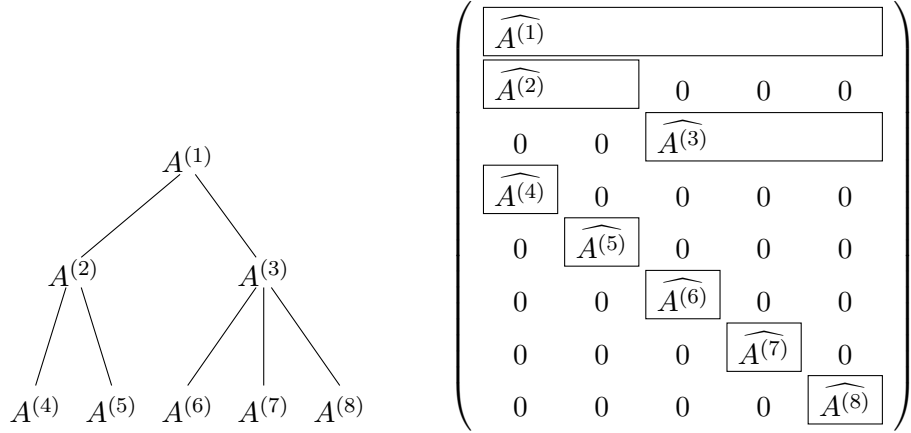
\begin{figure}[hbt]
\begin{center}
\begin{tikzpicture}
	\draw (0,0) node[below]{$A^{(4)}$} -- (0.4,1.3);
	\draw (1,0) node[below]{$A^{(5)}$} -- (0.6,1.3);
	\draw (2,0) node[below]{$A^{(6)}$} -- (2.9,1.3);
	\draw (3,0) node[below]{$A^{(7)}$} -- (3,1.3);
	\draw (4,0) node[below]{$A^{(8)}$} -- (3.1,1.3);
	\draw (0.5,1.5) node{$A^{(2)}$};
	\draw (0.6,1.7) -- (1.9,2.8);
	\draw (3,1.5) node{$A^{(3)}$};
	\draw (2.9,1.7) -- (2.1,2.8);
	\draw (2,3) node{$A^{(1)}$};
\end{tikzpicture}
\hspace{20pt}
\begin{tikzpicture}
\def\b{{\bullet}}
\matrix (m) [matrix of math nodes,
inner sep=3pt, column sep=3pt, row sep=2pt,
nodes={inner sep=0.25em,text width=2em,align=center},
left delimiter=(,right delimiter=),
]{%
    \widehat{A^{(1)}} & \ & \ & \ & \ \\
	\widehat{A^{(2)}} & \ & 0 & 0 & 0 \\
	0 & 0 & \widehat{A^{(3)}} & \ & \ \\
	\widehat{A^{(4)}} & 0 & 0 & 0 & 0 \\
	0 & \widehat{A^{(5)}} & 0 & 0 & 0 \\
	0 & 0 & \widehat{A^{(6)}} & 0 & 0 \\
	0 & 0 & 0 & \widehat{A^{(7)}} & 0 \\
	0 & 0 & 0 & 0 & \widehat{A^{(8)}} \\
};%
\draw (m-1-1.north west) rectangle (m-1-5.south east);
\draw (m-2-1.north west) rectangle (m-2-2.south east);
\draw (m-3-3.north west) rectangle (m-3-5.south east);
\draw (m-4-1.north west) rectangle (m-4-1.south east);
\draw (m-5-2.north west) rectangle (m-5-2.south east);
\draw (m-6-3.north west) rectangle (m-6-3.south east);
\draw (m-7-4.north west) rectangle (m-7-4.south east);
\draw (m-8-5.north west) rectangle (m-8-5.south east);

\end{tikzpicture}
\caption{A matrix tree $T$ and the induced tree-fold matrix $\T$, where $\widehat{A^{(i)}}$ denotes the part of $A^{(i)}$ that consists of non-zero columns.}
\label{fig:matrix_tree}
\end{center}
\end{figure}

Analogously to our $n$-fold results, we will provide an upper bound on the $l_1$-norm of Graver basis elements of tree-fold matrices, together with a proximity result for optimum solutions.
This will be sufficient to obtain an algorithm with a comparable running time.

Throughout this section, $T$ will denote a tree as in Figure~\ref{fig:matrix_tree},
we will denote the depth by $\tau$ and enumerate the layers starting at the deepest leaves
(the leaves are not necessarily all in the same layer).
The whole matrix induced by a tree-fold IP will be denoted by $\T$. This is, the IP~\eqref{eq:tree-fold-system} can be rewritten as
\begin{align}
\label{eq:tree-fold-IP}
    \max c^T x & \\
    \T x &= b \nonumber \\
    l &\leq x \leq u \nonumber \\
    x &\in \Z^{n} \nonumber
\end{align}
\begin{lem}
\label{lem:cycle_bound_tree}
Let $\T$ be a tree-fold matrix
where the corresponding matrix tree $T$ has $\tau$ layers.
Let the matrices of layer $i$ have at most $s_i$ rows and define
$s = \prod_{i=1}^\tau (s_i + 1)$ and $\Delta \df \norm{\T}_\infty$.
Then the Graver basis elements of $\T$ are bounded in their $l_1$-norm by
\[
	L_\tau \leq \left(3 s \Delta \right)^{s-1}.
\]
\end{lem}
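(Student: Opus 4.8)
The plan is to prove this by induction on the tree depth $\tau$, mirroring the two-level argument of Lemma~\ref{lem:cycle_bound_An} but iterating it up the tree. The base case is a single layer of leaves: each leaf matrix has at most $s_1$ rows and entries bounded by $\Delta$, so by Lemma~\ref{lem:1} its Graver-basis elements have $\ell_1$-norm at most $(2s_1\Delta+1)^{s_1}$, which is comfortably below $L_1 \le (3(s_1+1)\Delta)^{s_1}$. The inductive hypothesis will be that every cycle of the tree-fold matrix rooted at a node in layer $i$ has $\ell_1$-norm bounded by $L_i \le (3 s^{(i)} \Delta)^{s^{(i)}-1}$, where $s^{(i)} = \prod_{j=1}^i (s_j+1)$ is the partial product over the layers at or below that node.

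For the inductive step I would fix a node $v$ in layer $i+1$ whose own matrix $\widehat{A}$ has at most $s_{i+1}$ rows, and consider a cycle $y$ of the subtree rooted at $v$. Exactly as in the proof of Lemma~\ref{lem:cycle_bound_An}, the restriction of $y$ to each child subtree is itself a cycle of that child's tree-fold matrix, so I decompose it into Graver-basis elements $y^{(i)}_j$ of the children, each of $\ell_1$-norm at most $L_i$ by the inductive hypothesis. Applying $\widehat{A}$ to each such child-element $g$ gives a vector with $\|\widehat{A}\,g\|_\infty \le \Delta L_i$, so the images form a sequence of vectors in $\mathbb{Z}^{s_{i+1}}$ of bounded $\ell_\infty$-norm summing to zero. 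I then invoke the Steinitz Lemma (Lemma~\ref{lem:steinitz}) to reorder them so all partial sums stay within $s_{i+1}\Delta L_i$ in $\ell_\infty$-norm; if two partial sums coincided, $y$ would decompose into two sign-compatible cycles, contradicting indecomposability. Hence the number of summands is at most $(2 s_{i+1}\Delta L_i + 1)^{s_{i+1}}$, and since each summand contributes at most $L_i$ to $\|y\|_1$, we get the recursion
\[
    L_{i+1} \le L_i \left(2 s_{i+1}\Delta L_i + 1\right)^{s_{i+1}}.
\]

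**The main obstacle is the bookkeeping to close the recursion into the clean closed form $(3 s \Delta)^{s-1}$.** Unwinding $L_{i+1} \le L_i (2 s_{i+1}\Delta L_i+1)^{s_{i+1}}$ directly produces a nested product over all layers, and one must verify by induction that it collapses to the stated bound. The key algebraic claim to carry is that with $s^{(i)} = \prod_{j\le i}(s_j+1)$, the quantity $L_i \le (3 s^{(i)}\Delta)^{s^{(i)}-1}$ is preserved: plugging the hypothesis into the recursion, the dominant term of $(2 s_{i+1}\Delta L_i+1)^{s_{i+1}}$ is roughly $(3 s_{i+1}\Delta)^{s_{i+1}} L_i^{s_{i+1}}$, and multiplying by the leading $L_i$ gives an exponent $1 + s_{i+1}(s^{(i)}-1) = s^{(i)} s_{i+1} - s_{i+1} + 1 \le s^{(i+1)} - 1$ on the base $(3 s^{(i+1)}\Delta)$, which is exactly what is needed once one checks the base $3 s_{i+1}\Delta \le 3 s^{(i+1)}\Delta$ absorbs the cross terms. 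Taking $i+1 = \tau$ so that $s^{(\tau)} = s$ yields $L_\tau \le (3 s \Delta)^{s-1}$ as claimed.
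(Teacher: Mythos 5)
Your proposal follows essentially the same route as the paper: induction on the depth, decomposing a cycle into Graver-basis elements of the child subtrees, applying the root matrix and the Steinitz Lemma to bound the number of summands, and closing the recursion $L_{i+1} \le L_i\bigl(2 s_{i+1}\Delta L_i + 1\bigr)^{s_{i+1}}$ to the stated closed form. The only quibble is a small slip in your exponent bookkeeping (the leading factor $L_i$ contributes $s^{(i)}-1$ to the exponent, not $1$, giving total exponent $(s_{i+1}+1)(s^{(i)}-1)+s_{i+1} = s^{(i+1)}-1$ exactly), but the inequality still holds and the argument is sound.
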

\begin{proof}
We enumerate the layers of $T$ starting at the layer with the deepest leaves.
We will prove the claim by induction on the number $\tau$ of layers in the tree $T$.

First observe that for $\tau = 1$, the claim follows by Lemma~\ref{lem:1}, as
\[
    L_1 \leq (2s_1\Delta + 1)^{s_1} \leq (3s\Delta)^{s-1}.
\]
For the induction step, note that every child matrix $A^{(i)}$ of the root in $T$ can be seen as the root matrix of a subtree $T_i$ in $T$ of depth $\tau-1$ with at most $s_1,\dots,s_{\tau-1}$ rows in the corresponding layers.
More formal, delete the root $A^{(1)}$ in $T$ and let $T_i$ be the connected component $A^{(i)}$ is in.
Write
\[
    \widetilde{s} = \prod_{i=1}^{\tau-1} (s_i + 1),
\]
i.e.\ $s = \widetilde{s}(s_{\tau} + 1)$. By induction, we know that all Graver basis elements of the subtree-fold IPs $\T_i$ induced by $T_i$ are bounded by
\begin{align}
\label{eq:subtree_bound}
    L_{\tau -1 } &\leq (3\widetilde{s}\Delta)^{\widetilde{s}-1} \leq (3s\Delta)^{\widetilde{s}-1}.
\end{align}
The rest of the induction step works similar to the proof of Lemma~\ref{lem:cycle_bound_An}. We pick a cycle $y$ of $\T$, decompose it into Graver basis elements for the subtree-fold matrices $\T_i$ and obtain a Steinitz sequence of vectors bounded by the induction hypothesis. 

Denote the root in $T$ by $A^{(1)}$ and let the children of $A^{(1)}$ in $T$ be $A^{(2)}, \dots, A^{(M)}$,
with induced subtree-fold matrices $\T_2, \dots, \T_M$.
This is, we decompose our matrix $\T$ as follows.

\begin{tikzpicture}
\matrix (m) [matrix of math nodes,
inner sep=3pt, column sep=3pt, row sep=2pt,
nodes={inner sep=0.25em,text width=2em,align=center},
left delimiter=(,right delimiter=),
]{%
    A^{(1)} & \ & \ \\
    \ & \ & \ \\
    \T_2 & \cdots & \T_m \\
    \ & \ & \ \\
};%
\draw (m-1-1.north west) rectangle (m-1-3.south east);
\draw (m-2-1.north west) rectangle (m-4-1.south east);
\draw (m-2-2.north west) rectangle (m-4-2.south east);
\draw (m-2-3.north west) rectangle (m-4-3.south east);
\end{tikzpicture}

The submatrices corresponding to the subtrees $T_i$ are denoted with $\T_i$.
Let $y$ be a cycle of $\T$ and partition it into bricks,
\[
    y^T = \left( y^{(2)T}, \dots, y^{(M)T} \right),
\]
such that each $y^{(i)}$ is a cycle of $\T_i$.
Moreover, we split the matrix $A^{(1)} = (A^{(1)}_2,\dots,A^{(1)}_M)$ into blocks according to the supports of the $\T_i$, hence
\[
	A^{(1)} y = A^{(1)}_2 y^{(2)} + \dots + A^{(1)}_M y^{(m)}.
\]
We can decompose the cycles $y^{(i)}$ even further into Graver basis elements of the $\T_i$,
\[
    y^{(i)} = y_1^{(i)} + \dots + y_{m_i}^{(i)}.
\]
This gives us a decomposition of zero,
\begin{align*}
    0 &= A^{(1)} y \\
        &= \sum_{i=2}^M A_i^{(1)} y^{(i)} \\
        &= \sum_{i=2}^M \sum_{j=1}^{m_i} A_i^{(1)} y_{j}^{(i)},
\end{align*}
where by induction hypothesis,
\begin{alignat*}{2}
    & & \norm{y_{j}^{(i)}}_1 &\leq L_{\tau - 1} \\
    \Rightarrow & \quad & \norm{A^{(i)} y_{j}^{(i)}}_\infty &\leq \Delta L_{\tau - 1}.
\end{alignat*}
By Steinitz, there is a reordering $v_1,\dots,v_m$, $m= \sum_i m_i$, of these vectors s.t.\ each partial sum is bounded in the infinity-norm by
$ s_\tau \Delta L_{\tau - 1} $.
If there are two identical partial sums, we can decompose $y$, hence there are at most 
$(2 s_\tau \Delta L_{\tau - 1} + 1)^{s_\tau}$ vectors $A_i^{(1)} y_{j}^{(i)}$.
As each $y_{j}^{(i)}$ has $l_1$-norm at most $L_{\tau -1 }$, we obtain
\begin{align*}
    \norm{y}_1 &\leq L_{\tau - 1} (2 s_\tau \Delta L_{\tau - 1} + 1)^{s_\tau} \\
        &\leq (3 s \Delta)^{\widetilde{s} - 1} (3s \Delta)^{s_\tau \widetilde{s}} \\
        &= (3s\Delta)^{s - 1},
\end{align*}
using~\eqref{eq:subtree_bound}. This finishes the proof.
\end{proof}
In the following Lemma we state a proximity result for tree-fold IPs similar to Lemma~\ref{lem:proximity} for generalized n-fold IPs.
\begin{lem}
\label{lem:proximity_tree}
Let $\T$ be a matrix of tree-fold structure corresponding to the IP~\eqref{eq:tree-fold-system},
and let $x^\star$ be an optimum solution to the LP relaxation of~\eqref{eq:tree-fold-system}.
There exists an optimum integral solution $z^\star$ to~\eqref{eq:tree-fold-system} with
\[
    \norm{x^\star - z^\star}_1  \leq n L_\tau .
\]
\end{lem}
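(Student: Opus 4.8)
The plan is to mirror the proof of Lemma~\ref{lem:proximity} for the generalized $n$-fold case, adapting the column-perturbation trick to the tree-fold structure. First I would fix an optimum vertex solution $x^\star$ of the LP relaxation and let $z^\star$ be an optimum integral solution minimizing $\norm{x^\star - z^\star}_1$. Since $z^\star - x^\star$ lies in the kernel of $\T$ but is fractional, I cannot decompose it directly into Graver basis elements. As in Lemma~\ref{lem:proximity}, I would round $x^\star$ towards $z^\star$ to obtain $\lfloor x^\star \rceil$ and write the fractional part $\{x^\star\} = x^\star - \lfloor x^\star \rceil$. The key algebraic identity is
\[
    \T(z^\star - x^\star) = \T(z^\star - \lfloor x^\star \rceil) - \T\{x^\star\} = 0,
\]
where $\T\{x^\star\} = \sum_{i=1}^n \{x^\star\}_i \T_i$ is an \emph{integral} vector even though the coefficients are fractional.

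The central construction is to replace each column-contribution $\{x^\star\}_i \T_i$ by an integral vector $w_i$ dominated (entrywise, in sign and magnitude) by $\pm \T_i$, chosen so that $\T\{x^\star\} = w_1 + \dots + w_n$. I would do this exactly as in the $n$-fold proof: for each row $j$, round up precisely $f = \sum_i (a_i - \lfloor a_i\rfloor)$ of the fractional entries and round the rest down, where the $a_i$ are the $j$-th entries of $\{x^\star\}_i \T_i$. Because $\norm{\{x^\star\}}_\infty \le 1$, each $w_i$ inherits the zero pattern of its column $\T_i$. This is the crucial point for the tree-fold setting: since each $w_i$ has support contained in $\supp(\T_i)$, the augmented matrix $(\T, -\T^\prime)$ with $\T^\prime = (w_1,\dots,w_n)$ is again of tree-fold structure (now with $2n$ columns, after reindexing), governed by the same row counts $s_1,\dots,s_\tau$. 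Hence Lemma~\ref{lem:cycle_bound_tree}, being independent of the number of columns, bounds the $l_1$-norm of its Graver basis elements by the same $L_\tau$.

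I would then write the integral kernel identity
\[
    \T(z^\star - x^\star) = (\T, -\T^\prime)
        \begin{pmatrix} z^\star - \lfloor x^\star \rceil \\ \mathbf{1}_n \end{pmatrix} = 0,
\]
and decompose the integral vector $\binom{z^\star - \lfloor x^\star \rceil}{\mathbf{1}_n}$ into sign-compatible Graver basis elements of $(\T,-\T^\prime)$, each of $l_1$-norm at most $L_\tau$. If $\norm{z^\star - x^\star}_1 > n L_\tau$, then the extended vector has $l_1$-norm exceeding $n L_\tau$, forcing at least $n+1$ Graver components. Since the bottom block contributes $\norm{\mathbf{1}_n}_1 = n$, at least one component must be of the form $\binom{\bar y}{\mathbf{0}_n}$, yielding a genuine cycle $\bar y$ of $\T$ dominated by $z^\star - \lfloor x^\star \rceil$. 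A standard two-case argument finishes the contradiction: if $c^T \bar y \le 0$, then $\bar z = z^\star - \bar y$ is feasible with $c^T \bar z \ge c^T z^\star$ and strictly closer to $x^\star$, contradicting minimality of the distance; if $c^T \bar y > 0$, then $x^\star + \bar y$ is a better LP solution, contradicting optimality of $x^\star$.

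The main obstacle I anticipate is verifying that the perturbed matrix $(\T, -\T^\prime)$ genuinely retains tree-fold structure. In the $n$-fold proof the argument is immediate because the block pattern is rigid, but here I must argue that interleaving the $w_i$ columns with the original columns preserves the nested-support (laminar) condition defining a tree-fold IP, and that the relevant per-layer row bounds $s_i$ are unchanged. Since each $w_i$ shares the support of $\T_i$, the laminar family of supports is preserved under the reindexing, so Lemma~\ref{lem:cycle_bound_tree} applies verbatim; making this precise is the one place where the tree-fold case requires slightly more care than the $n$-fold case, but it is otherwise routine.
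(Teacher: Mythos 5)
Your proposal is correct and follows essentially the same route as the paper's own proof: the same rounding of $x^\star$ towards $z^\star$, the same row-wise construction of the integral vectors $w_i$ dominated by $\pm\T_i$, the same augmented matrix $(\T,-\T^\prime)$ with the observation that it retains tree-fold structure so that Lemma~\ref{lem:cycle_bound_tree} still gives the bound $L_\tau$, and the same counting and two-case contradiction at the end. The point you flag as needing care is exactly the one the paper handles by noting that each $w_i$ inherits the support of $\T_i$, so the laminar support family (and the per-layer row counts) is unchanged.
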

\begin{proof}
The proof works similar to the one of Lemma~\ref{lem:proximity}.
Let $x^\star$ be an optimum vertex solution
of the LP relaxation of~\eqref{eq:tree-fold-IP}
and $z^\star$ be an optimum (integral) solution of~\eqref{eq:tree-fold-IP}
that minimizes the $l_1$-distance to $x^\star$.

The idea is to show that if the $l_1$-distance is too large, we can find a cycle dominated by $z^\star - x^\star$ and either add it to $x^\star$ or subtract it from $z^\star$ leading to a contradiction in both cases.
However, as $z^\star - x^\star$ is fractional, we cannot decompose it directly but have to work around the fractionality.

To this end, denote with $\lfloor x^\star \rceil$ the vector
$x^\star$ rounded towards $z^\star$ i.e. $\lfloor x^\star_i \rceil = \lfloor x^\star_i \rfloor$ if $z^{\star}_i \leq x^{\star}_i$ and $\lfloor x^\star_i \rceil = \lceil x^\star_i \rceil$ otherwise.
Denote with $\{x^\star\}$ the fractional rest i.e.  $\{x^\star\} = x^\star - \lfloor x^\star \rceil $.
Consider the equation
\[
     \T \left( z^\star - x^\star \right) =  \T \left( z^\star - \lfloor x^\star \rceil \right) - \T \{x^\star\} = 0.
\]
with the integral vector $\T \{x^\star\}$.
For each index $i$, consider the fractional column vector $(\{x^\star\})_i (\T_i)$.
We will obtain an integral vector $w_i$ by rounding each vector
$(\{x^\star\})_i (\T_i)$ suitably such that
\[
    \T \{x^\star\} = \sum_{i=1}^n (\{x^\star\})_i (\T_i) = w_1 + \dots + w_{n}.
\]
To be more formal, fix an index $j$ and let $a_1,\dots,a_{n}$ denote the $j$-th entry of the vectors $(\{x^\star\})_i (\T_i)$.
Define
\[
f \df \left( \sum_{i=1}^{n} a_i - \lfloor a_i \rfloor \right) \in \Z_+
\]
as the sum of the fractional parts.
We round up $f$ of the fractional entries $a_i$, and we round down all other fractional entries. If some $a_i$ is integral already, it remains unchanged.
After doing this for each component $j$, we obtain the vectors $w_i$ as claimed.
As $\norm{\{x^\star\}}_\infty \leq 1$, each vector $w_i$ is dominated by either $\T_i$ or $-\T_i$, in particular inhabits the zero entries.

Define the matrix
\[
    \T^\prime \df \left( w_1, \dots , w_{n} \right).
\]
As $\T^\prime$ arises from $\T$ in such a strong way, the matrix $(\T,-\T^\prime)$ has the same tree-fold structure as $\T$ (we basically doubled the number of non-zero columns in each matrix $A^{(i)}$).
As Lemma~\ref{lem:cycle_bound_tree} does not depend on $n$, the Graver basis elements of $(\T,-\T^\prime)$ are bounded by $L_\tau$.
We can now identify
\begin{align*}
    \T (z^\star - x^\star) &= \left( \T, - \T^\prime \right)
        \begin{pmatrix}
            z^\star - \lfloor x^\star \rceil \\ \mathbf{1}_{n}
        \end{pmatrix},
\end{align*}
and decompose the vector $\binom{z^\star - \lfloor x^\star \rceil}{\mathbf{1}_{n}}$ into Graver basis elements dominated by $\binom{z^\star - \lfloor x^\star \rceil}{\mathbf{1}_{n}}$.
But if 
\[
    n L_\T < \norm{z^\star - x^\star}_1 \leq \norm{\begin{pmatrix}
            z^\star - \lfloor x^\star \rceil \\ \mathbf{1}_{n}
        \end{pmatrix}}_1,
\]
we obtain at least $n+1$ cycles.
As $\norm{\mathbf{1}_{n}}_1 = n$, this grants a cycle $\binom{\bar{y}}{\mathbf{0}_{n}}$ and hence a cycle $\bar{y}$ of $\T$.

{\bf Case 1:} $c^T \bar{y} \leq 0$: 
As $\bar{y}$ is dominated by $z^\star - \lfloor x^\star \rceil$,
removing cycle $\bar{y}$ from the solution gives a new solution $\bar{z} = z^\star - \bar{y}$ with $c^T \bar{z} \geq c^T z^\star$, which is closer to the fractional solution $x^\star$. However, this contradicts the fact that $z^\star$ was chosen to be a solution with minimal distance $\norm{x^\star - z^\star}_1$.

{\bf Case 2:} $c^T \bar{y} > 0$:
As we rounded $x^\star$ towards $z^\star$ and $\bar{y}$ is dominated by $z^\star - \lfloor x^\star \rceil$,
we can add $\bar{y}$ to $x^\star$ and obtain a better solution,
contradicting its optimality.
\end{proof}
We conclude with the following theorem that states the running time of our algorithm to solve a tree-fold IP.
\begin{thm}
\label{thm:tree-fold-time}
Let $\T$ be of tree-fold structure with infinity-norm $\Delta$
and corresponding tree $T$.
Let $\tau$ denote the number of layers of $T$
and let the matrices of layer $i$ have at most $s_i$ rows.

Define $s = \prod_{i=1}^\tau (s_i + 1)$ and $\sigma = \sum_{i=1}^\tau s_i$.
Let $n$ denote
the number of columns of $\T$ and $l,u \in (\Z \cup \{\infty\})^n$.
We can solve the IP~\eqref{eq:tree-fold-IP},
\begin{align*}
    \max c^T x & \\
    \T x &= b \nonumber \\
    l &\leq x \leq u \nonumber \\
    x &\in \Z^{n} \nonumber
\end{align*}
in time
\[
n^2 \varphi \log^2 n (s\Delta)^{\mathcal{O}(\sigma s)} + \textbf{LP}
\]
where $\varphi$ denotes the logarithm of the largest number occurring in the input,
and $\textbf{LP}$ denotes the time needed to solve the LP relaxation of \eqref{eq:tree-fold-IP}.
\end{thm}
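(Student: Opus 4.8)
The plan is to assemble Theorem~\ref{thm:tree-fold-time} from the two structural ingredients just proved --- the Graver-norm bound of Lemma~\ref{lem:cycle_bound_tree} and the proximity bound of Lemma~\ref{lem:proximity_tree} --- exactly as Theorem~\ref{thm:main} was assembled from Lemma~\ref{lem:cycle_bound_An} and Lemma~\ref{lem:proximity} in the $n$-fold case. The skeleton of the argument is the Graver augmentation framework of Section~\ref{sec:grav-bases-augm}: starting from a feasible solution, repeatedly solve an augmentation IP of the form~\eqref{eq:nfIPker} (now with $\T$ in place of $\A$ and the norm bound $L_\tau$ in place of $L_\A$) to reduce the optimality gap by a factor of $1-1/(2n)$ per step. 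The three sub-tasks are therefore: (a) bound the number of augmentation iterations, (b) bound the time to solve a single augmentation IP over $\T$, and (c) handle infinite upper bounds via proximity and produce an initial feasible solution.

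\textbf{First I would} dispatch the iteration count and the initial-feasibility issue, since these carry over from the $n$-fold proof with only cosmetic changes. The Cook--Fonlupt--Schrijver decomposition applies verbatim to cycles of $\T$, so the pigeonhole argument of Lemma~\ref{lem:nfIPred} gives one augmenting Graver element achieving a $1/(2n)$-fraction of the optimal improvement; rounding $\lambda$ to a power of two costs another factor of $2$, and the number of iterations is $\mathcal{O}(n\log(n\Gamma)+n\varphi)$ with $\mathcal{O}(\log\Gamma)$ guesses for $\lambda$. When some $u_i=\infty$, I invoke Lemma~\ref{lem:proximity_tree} after solving the LP relaxation to impose artificial bounds with $\Gamma \leq 2nL_\tau+1$, so $\log\Gamma \in \mathcal{O}(\log n + \log L_\tau)$, contributing only to the polylogarithmic and exponential-in-parameters factors. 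An initial feasible solution is obtained by the same slack-variable IP as in Theorem~\ref{thm:main}: introduce a diagonal matrix $D$ with $D_{ii}=b_i$, minimize $\mathbf{1}^T y$ starting from $(x,y)=(0,\mathbf{1})$; the augmented matrix $(\T,D)$ is again tree-fold (each $A^{(i)}$ simply gains extra non-zero columns), so the same algorithm applies without changing the asymptotic running time.

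\textbf{The main obstacle} is (b): I must describe a dynamic program solving the tree-fold augmentation IP whose running time matches $(s\Delta)^{\mathcal{O}(\sigma s)}$, and here the layered digraph of Lemma~\ref{lem:nfIPker} does not transfer mechanically, because the block structure of $\T$ is recursive rather than flat. The natural approach is a nested or bottom-up dynamic program following the tree $T$: at a leaf matrix one tracks partial column sums exactly as in Lemma~\ref{lem:nfIPker}, and at an internal node of layer $i$ one must record the partial sum of the node's own $s_i$ rows while the child subtrees are solved recursively and glued together, enforcing that each child's contribution to its own constraint rows is zero before advancing. The state at each layer is an integer vector whose infinity norm is controlled by $\Delta L_\tau$, so by Lemma~\ref{lem:cycle_bound_tree} the number of states per layer is $(2\Delta L_\tau+1)^{s_i} \leq (3s\Delta)^{\mathcal{O}(s\,s_i)}$; multiplying across the $\tau$ layers and using $\sum_i s_i = \sigma$ yields the claimed $(s\Delta)^{\mathcal{O}(\sigma s)}$ factor per unit of work, with an additional factor of $n$ for the columns and the Moore--Bellman--Ford pass.

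\textbf{Finally I would} multiply the three estimates together. Combining $\mathcal{O}(n\log n(\log n\Gamma + \varphi))$ iterations with the per-augmentation cost $n(s\Delta)^{\mathcal{O}(\sigma s)}$, and absorbing $\log\Gamma$ and $\log L_\tau$ into the exponential-in-parameters factor and the $\log^2 n$ term, gives the total running time $n^2\varphi\log^2 n\,(s\Delta)^{\mathcal{O}(\sigma s)}$, plus the one-time cost $\textbf{LP}$ of solving the linear relaxation needed for the proximity step. I expect the only genuinely new work relative to Theorem~\ref{thm:main} to be the careful specification of the recursive dynamic program and the verification that its state space is bounded by the Graver norm at every layer; everything else is a faithful re-run of the $n$-fold argument with $n$ replacing $nt$ and $L_\tau$ replacing $L_\A$.
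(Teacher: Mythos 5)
Your proposal is correct and follows essentially the same route as the paper: the augmentation framework, iteration count, $\lambda$-guessing, and initial-feasibility construction are carried over from Lemma~\ref{lem:nfIPred} and Theorem~\ref{thm:main}; the proximity bound of Lemma~\ref{lem:proximity_tree} supplies finite artificial bounds; and the augmentation IP is solved by a dynamic program whose state space per layer is $(2\Delta L_\tau+1)^{\sigma}$. The only cosmetic difference is that the paper retains the flat layered digraph of Lemma~\ref{lem:nfIPker} and notes that each column of $\T$ lies on a single root-to-leaf path of $T$ (hence has at most $\sigma$ non-zero entries, and only those rows are ever active between the zero-enforcing transitions), whereas you organize the identical state count as a recursive DP over the tree.
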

\begin{proof}
For solving the augmentation IP, we can set up the graph similar to the proof of Lemma~\ref{lem:nfIPker}.
Each layer $U_i$ will correspond to a column of $\T$ and
consist of points $u \in \Z^{m}$
where $m$ is the number of rows in $\T$ with
$\norm{u}_\infty \leq \Delta L_\tau$.
However, as each column of $\A$ corresponds to a path in $T$ starting at the root, each column intersects at most $\tau$ non-zero bricks and has at most $\sigma \df s_1 + \dots + s_\tau$
non-zero entries.
Hence $\vert U_i \vert \leq (2 \Delta L_\tau + 1)^{\sigma}$.
The out-degree is still bounded by $2L_\tau + 1$,
hence the number of arcs is bounded by
\begin{align*}
n (\Delta L_\tau)^\sigma (2 L_\tau + 1)
		&= n (s\Delta)^{\mathcal{O}(\sigma s)},
\end{align*}
yielding this running time.

If $u$ is not finite, we solve the LP relaxation
and use the proximity result in Lemma~\ref{lem:proximity_tree}
in order to replace $l$ and $u$ by finite bounds.
Hence, we have to guess $\LO (s \log (n s \Delta) )$ values for $\lambda$.
The value $c^T (z_{opt} - z_{init})$
can be bounded by $n 2^\varphi (2 L_\tau + 1)$,
hence we have at most
$\LO (n \varphi s \log n s \Delta)$ augmentation steps.
Thus in total, we solve the augmentation IP at most
\[
\LO (n s^2 \varphi \log (ns \Delta)^2)
\]
times.
Finding an initial solution also works analogously, hence the overall running time is
\[
	n^2 \varphi \log^2 n (s\Delta)^{\mathcal{O}(\sigma s)},
\]
finishing the proof.
\end{proof}




		\bibliographystyle{alpha}
		\bibliography{bibliography}


\end{document}